\documentclass[a4paper,11pt]{article}

\topmargin=-.5cm \textheight=22cm \oddsidemargin=.0cm
\evensidemargin=.0cm
\textwidth=16.5cm

\usepackage{amsthm,amsmath,amssymb,amsfonts,pstricks,pst-node,graphicx}

\DeclareMathAccent{\wtilde}{\mathord}{largesymbols}{"65}
\DeclareMathAccent{\what}{\mathord}{largesymbols}{"62}

\newcommand\C{{\mathbb C}}
\newcommand\bbbr{{\mathbb R}}

\newcommand\cL{{\mathcal L}}

\newcommand\cA{{\mathcal A}}
\def\bbbc{{\mathbb C}}
\def\bbbr{{\mathbb R}}

\renewcommand\i{{\iota}}
\def\im{\operatorname{Im}}
\def\re{\operatorname{Re}}

\newcommand\ba{{\bf a}}

\newtheorem{Rem}{Remark}

\newtheorem{Pro}{Proposition}

\makeatletter
\def\wb{\accentset{{\cc@style\underline{\mskip10mu}}}}
\makeatother

\begin{document}
\title{Dressing method for the vector sine-Gordon equation and its
soliton interactions}
\author{ Alexander V. Mikhailov$^{\star}$, Georgios Papamikos$^{\dagger}$ and 
Jing Ping Wang$ ^\dagger $
\footnote{Corresponding author: E-mail address: j.wang@kent.ac.uk and Tel: +44 
1227 827181}
\\
$\dagger$ School of Mathematics, Statistics \& Actuarial Science, University of Kent, Canterbury, UK \\
$\star$ School of Mathematics, University of Leeds, Leeds, UK
}
\date{}
\maketitle
\begin{abstract}
In this paper, we develop the dressing method to study the exact solutions for 
the vector sine-Gordon equation.
The explicit formulas for one kink and one breather are derived. The method can be used to construct multi-soliton solutions. 
Two soliton interactions are also studied. The formulas for position shift of 
the kink and position and phase shifts of the breather 
are given. These quantities only depend on the pole positions of the dressing
matrices. 
\end{abstract}

\section{Introduction}
This paper is devoted to the study of an $O(n)$-invariant generalisation of the
sine-Gordon equation
\begin{equation}\label{vsG}
 D_t \left(\frac{{\vec \alpha}_x}{\beta}\right)={\vec \alpha}, \quad 
\beta^2+\langle{\vec \alpha}, {\vec \alpha}\rangle=1,
\end{equation}
where the dependent variable ${\vec \alpha}=(\alpha^1, \cdots, \alpha^n)^T$ is 
$n$-dimensional 
real vector field and $\beta\in\bbbr$. Here and in what follows the upper index
$T$ denotes the transpose of a vector or a matrix.
We use the notation  $\langle\cdot, \cdot\rangle$ for the Euclidean dot product
of two vectors.

Equation (\ref{vsG}) was first appeared in \cite{PoR79} viewed as a reduction of
the two-dimensional $O(n)$ nonlinear $\sigma$-model \cite{mr80c:81115}. Its
integrability properties were further studied afterwards. The Lax pairs was
given in \cite{EP79} and its Lagrangian formulation in \cite{bps95}. Later,
this equation reappeared in the study of connection between finite dimensional
geometry, infinite dimensional geometry and integrable systems \cite{wang02}.
It was derived as the inverse flow of the vector modified Korteweg-de Vries
equation
\begin{eqnarray}\label{vKdV}
{\vec u}_{\tau}={\vec u}_{xxx}+\frac{3}{2} \langle{\vec u}, {\vec u}\rangle 
{\vec u}_{x},
\quad {\vec u}= \frac{{\vec \alpha}_x}{\beta},
\end{eqnarray}
whose Hamiltonian, symplectic and
hereditary recursion operators were naturally derived using the structure
equation for the evolution of a curve embedded in an $n$-dimensional Riemannian
manifold with constant curvature \cite{MR2058803}. These have been recently 
re-derived in \cite{anco06}.  Besides, a partial classification of vector
sine-Gordon equations using symmetry tests was
done in \cite{ancow05}.

Equation (\ref{vsG}) is a higher-dimensional generalisation of the well-known scalar sine-Gordon equation
\begin{equation}\label{ssg}
 \theta_{xt}=\sin \theta .
\end{equation}
Indeed, it can be obtained by taking the dimension $n=1$ and letting $\beta=\cos \theta$ and $\alpha^1=\sin \theta$.
The scalar sine-Gordon equation originates in differential geometry and has profound applications in physics
and in life sciences (see recent review \cite{II13}). Vector generalisations of 
integrable equations have proved to be useful in applications \cite{manakov}. 
They can be associated with symmetric spaces \cite{FK}.

The rational dressing method was originally proposed in \cite{mr80c:81115, zash}
and 
developed in  \cite{mik81}. This method enables one to construct multi-soliton 
solutions and analyse soliton interactions in detail using basic knowledge of Linear 
Algebra. In this paper, we develop the dressing method for the vector sine-Gordon 
equation (\ref{vsG}) and show that similar to the scalar sine-Gordon 
equation (\ref{ssg}) there are two distinct types of solitons, namely kinks and 
breathers. One kink solution is a stationary wave propagating with a constant 
velocity. We show that a kink solution of the vector sine-Gordon equation can be 
obtained from a kink  solution of (\ref{ssg}) by setting $\vec{\alpha}={\bf 
a}\sin\theta,\ \beta=\cos \theta$, where ${\bf a}$ is a constant unit length 
vector in $\bbbr^n$. A general two kink solution of (\ref{vsG}) cannot be 
obtained from solutions of (\ref{ssg}), but it can be seen as a two kink 
solution of a vector sine-Gordon equation (\ref{vsG}) with $n=2$. One breather 
solution is a localised and periodically oscillating wave moving with a constant 
velocity. One breather solution of the general $O(n)$ invariant equation 
(\ref{vsG}) can 
be obtained from a breather solution of (\ref{vsG}) with $n=2$ by an 
appropriate $O(n)$ rotation. Two breathers solution can be obtained from the 
corresponding solution of (\ref{vsG}) with $n=4$, etc. Surprisingly, 
the effects of interaction, such as the displacement and a phase shift (for 
breathers) are exactly the same as in the case of the scalar sine-Gordon 
equation (\ref{ssg}) \cite{fad87}. Such interaction properties are naturally valid for
the vector modified Korteweg-de Vries equation (\ref{vKdV}). The detailed study of soliton interactions 
for (\ref{vKdV}) when $n=2$ can be found in \cite{anco11}.

\section{Dressing method for the vector sine-Gordon equation}\label{sec2}
In this section, we begin with the Lax representation of the vector sine-Gordon equation (\ref{vsG}) given in \cite{wang02},
which is invariant under the reduction group $\mathbb{Z}_2\times \mathbb{Z}_2\times
\mathbb{Z}_2$.
We then study the conditions for the dressing matrix (assumed to be rational in spectral parameter) with the same symmetries.
The $1$-soliton solutions of (\ref{vsG}) correspond the dressing matrix with only simple poles belonging to a single orbit of the reduction 
group. For one kink, it has two pure imaginary simple poles and for one 
breather, it has four complex simple poles.  
Using the dressing method, we explicitly derive one kink and one breather
solutions starting with a trivial solution.

The vector sine-Gordon equation (\ref{vsG}) is equivalent to the compatibility 
condition \cite{wang02}
$
 [\cL,\cA]=0
$
for two linear problems
\begin{equation}\label{linprobs}
  \cL \Psi=0,\qquad \cA \Psi=0, 
\end{equation}
where
\begin{eqnarray}\label{lax}
 \cL=D_x-\lambda J -U\quad \mbox{and} \quad \cA=D_t+\lambda^{-1} V,
\end{eqnarray}
and 
\begin{eqnarray}\label{JUV}
J=\left(\begin{array}{ccc} 0 & 1 &{\bf 0}^T\\-1 & 0 & {\bf 0}^T\\{\bf 0} &{\bf 0}&0_n \end{array}\right),\
U=\left(\begin{array}{ccc} 0 & 0 &{\bf 0}^T\\0 & 0 & -{\vec \alpha}_x^T/\beta\\{\bf 0} &
{\vec \alpha}_x/\beta&0_n \end{array}\right),\ 
V=\left(\begin{array}{ccc} 0 & \beta &{\vec \alpha}^T\\-\beta & 0 & 
{\bf 0}^T\\-{\vec \alpha}
&{\bf 0}&0_n \end{array}\right)\ ,
\end{eqnarray}
where ${\bf 0}$ is $n$-dimensional zero column vector and $0_n$ is the $n\times
n$ zero matrix.
Without causing confusion, we sometimes simply write $0$ instead.

The Lax operators $\mathcal{L}$ and $\mathcal{A}$ are invariant under the 
(reduction) group
of automorphisms generated by the following three
transformations: the first one is
\begin{equation}
\i:\mathcal{L}(\lambda)\rightarrow -\mathcal{L}^{\dagger}(\lambda),
\label{redgroup1}
\end{equation}
where $\mathcal{L}^{\dagger}(\lambda)$ is the
adjoint operator defined by $\mathcal{L}^{\dagger}(\lambda)=-D_x-\lambda
J^T-U^T$. The invariance under this transformation implies the matrices $J$ and
$U$ are skew-symmetric. The second one is
\begin{equation}
r:\mathcal{L}(\lambda)\rightarrow \overline{ \mathcal{L}(\bar{\lambda})},
\label{redgroup3}
\end{equation}
where $\overline{\mathcal{L}(\lambda)}$ means its complex conjugate.
The invariance under this transformation reflects that the entries of matrices
$U$ and $V$ are
real. The last one is called Cartan involution
\begin{equation}
s:\mathcal{L}(\lambda)\rightarrow Q\mathcal{L}(-\lambda)Q,
\label{redgroup2}
\end{equation}
where $Q=\mbox{diag}(-1,1,\ldots,1)$, which leads to the reduction to the
symmetric space.

These three commuting transformations satisfy
\begin{eqnarray*}
 \i^2=r^2=s^2={\rm id} 
\end{eqnarray*}
and therefore generate the group $\mathbb{Z}_2\times \mathbb{Z}_2\times
\mathbb{Z}_2$. Indeed, the operator $\cA$ is also invariant under it, that is,
\begin{equation}
\i(\cA(\lambda))=\cA(\lambda), \qquad r(\cA(\lambda))=\cA(\lambda),\quad
 s(\cA(\lambda))=\cA(\lambda).
\label{eq:symconstr1}
\end{equation}
Thus we say the Lax representation of (\ref{vsG}) is invariant under the reduction group \cite{mik81,mik79,mik80} $\mathbb{Z}_2\times \mathbb{Z}_2\times
\mathbb{Z}_2$.

In what follows, we use the method of rational dressing \cite{mr80c:81115, 
zash, mik81} to 
construct 
new exact solutions of (\ref{vsG}) 
starting from an exact solution ${\vec \alpha_0},\ \beta_0$. Let us denote by
$U_0, V_0$ the matrices $U,V$ in which ${\vec \alpha},\ \beta$ are 
replaced by the exact solution ${\vec \alpha_0},\ \beta_0$ of (\ref{vsG}). The 
corresponding overdetermined linear system  
\begin{equation}
\label{naked}
\cL_0 \Psi_0=(D_x-\lambda J-U_0) \Psi_0=0 \qquad 
\cA_0 \Psi_0=(D_t+\lambda^{-1} V_0)\Psi_0=0 
\end{equation}
has a fundamental solution $\Psi_0(\lambda,x,t)$ invariant under transformations
(\ref{redgroup1})--(\ref{redgroup2}). Following
\cite{mr80c:81115, zash} we shall 
assume that the fundamental solution $\Psi(\lambda,x,t)$ for the new 
(``dressed'') linear problems 
\begin{equation}
\label{dressed}
\cL \Psi=(D_x-\lambda J-U) \Psi=0 \qquad 
\cA \Psi=(D_t+\lambda^{-1} V)\Psi=0 
\end{equation}
is of the form 
\begin{eqnarray}\label{dress}
 \Psi=\Phi(\lambda) \Psi_0, \quad \det \Phi\neq 0,
\end{eqnarray}
where the dressing matrix $\Phi(\lambda)$ is assumed to be rational in the 
spectral parameter $\lambda$ and to be invariant with respect to symmetries
\begin{eqnarray}
&& \Phi(\lambda)^{-1}=\Phi^{T}(\lambda), \qquad
\overline{\Phi(\bar{\lambda})}=\Phi(\lambda), \qquad Q \Phi(-\lambda)
Q=\Phi(\lambda).
\label{Inphi}
\end{eqnarray}
Conditions (\ref{Inphi}) guarantee that the corresponding Lax operators 
$\mathcal{L}$ and $\mathcal{A}$ are 
invariant with respect to transformations
(\ref{redgroup1})--(\ref{redgroup2}).

It follows from (\ref{naked}), (\ref{dressed}) and (\ref{dress}) that 
\begin{eqnarray}
&& \Phi (D_x -\lambda J-U_0) \Phi^{-1}=-\lambda J-U; \label{chix}\\
&& \Phi (D_t+\lambda^{-1} V_0) \Phi^{-1}=\lambda^{-1} V. \label{chit}
\end{eqnarray}
These equations enable us to specify the form of the dressing matrix $\Phi$ and 
construct the corresponding ``dressed'' solution $\vec{\alpha},\beta$ of the 
vector sine-Gordon equation (\ref{vsG}).

Let us consider the most
trivial case when the dressing matrix $\Phi$ does not 
depend on the spectral parameter $\lambda$. In this case the dressing results 
in a point transformation ($O(n)$ rotation) of the initial solution  
$\vec{\alpha}_0$.
\begin{Pro}
Assume that $\Phi$ is a $\lambda$ independent dressing matrix for the vector sine-Gordon equation (\ref{vsG}). If it is
invariant with respect to symmetries (\ref{Inphi}),  then 
\begin{eqnarray}\label{phinola}
 \Phi=\pm \left(\begin{array}{ccc}
                 1&0&{\bf 0}^T\\
                 0&1&{\bf 0}^T\\
                 {\bf 0}&{\bf 0}&\Omega
                \end{array}\right),
\end{eqnarray}
where $\Omega\in O(n,\bbbr)$ is a constant ($x,t$-- 
independent) matrix. The corresponding ``dressed'' 
solution is $\vec{\alpha}=\Omega\, \vec{\alpha}_0,\ \beta=\beta_0$.
\end{Pro}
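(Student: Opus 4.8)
The plan is to exploit the three symmetry conditions in (\ref{Inphi}) directly, since $\Phi$ is $\lambda$-independent and therefore a single constant matrix. First I would impose the Cartan involution $Q\Phi(-\lambda)Q=\Phi(\lambda)$; because $\Phi$ does not depend on $\lambda$ this reads simply $Q\Phi Q=\Phi$, i.e. $\Phi$ commutes with $Q=\mathrm{diag}(-1,1,\ldots,1)$. Writing $\Phi$ in block form matched to the $2{+}n$ splitting used for $J,U,V$, and noting that $Q$ has eigenvalue $-1$ on the first coordinate and $+1$ on the remaining $n+1$ coordinates, the commutation $Q\Phi Q=\Phi$ forces all off-diagonal blocks linking the first row/column to the rest to vanish. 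This already constrains $\Phi$ to be block diagonal with a $1\times 1$ block, so that the first coordinate decouples; combined with the other relations it will pin down the $(1,2)$ and $(2,1)$ structure as well.

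Next I would feed in the orthogonality relation $\Phi^{-1}=\Phi^T$, which makes $\Phi\in O(n+1,\bbbr)$ (an $(n{+}1)\times(n{+}1)$ orthogonal matrix once we absorb the extra coordinate), together with reality $\overline{\Phi}=\Phi$, which is automatic here since $\Phi$ is constant in $\lambda$ and the entries are forced real. The decisive extra input is the dressing equations (\ref{chix})--(\ref{chit}). With $\Phi$ constant, (\ref{chix}) gives $-\lambda\,\Phi J\Phi^{-1}-\Phi U_0\Phi^{-1}=-\lambda J-U$, and matching the coefficient of $\lambda$ yields $\Phi J\Phi^{-1}=J$, i.e. $\Phi$ commutes with $J$. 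Since $J$ acts nontrivially only on the first two coordinates (rotating them) and annihilates the last $n$, commuting with both $J$ and $Q$ is highly restrictive: on the top-left $2\times2$ corner $\Phi$ must commute with the standard rotation generator while also being block-diagonalized by $Q$, and on the bottom $n\times n$ corner it is unconstrained by $J$ but must be orthogonal. Carrying this out forces the top-left $2\times2$ block to be $\pm I_2$ and the bottom block to be an arbitrary $\Omega\in O(n,\bbbr)$, which is exactly the claimed form (\ref{phinola}).

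Finally I would read off the dressed solution. The $\lambda^0$ part of (\ref{chix}) gives $U=\Phi U_0\Phi^{-1}$, and the block structure of $U$ in (\ref{JUV})—whose only nonzero entries are $\pm{\vec\alpha}_x^T/\beta$ coupling the second coordinate to the last $n$—together with $\Phi=\mathrm{diag}(\pm1,\pm1,\Omega)$ shows that conjugation simply replaces ${\vec\alpha}_x/\beta$ by $\Omega\,{\vec\alpha}_x/\beta$ (the overall sign cancels). Likewise (\ref{chit}) gives $V=\Phi V_0\Phi^{-1}$, which transforms ${\vec\alpha}\mapsto\Omega{\vec\alpha}$ and leaves $\beta$ unchanged; since $\Omega$ must be $(x,t)$-independent for $U,V$ to retain the correct form, we conclude $\vec{\alpha}=\Omega\,\vec{\alpha}_0$ and $\beta=\beta_0$.

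I expect the main obstacle to be the top-left $2\times2$ block: one must argue carefully that commuting with the rotation-type generator $J$ \emph{and} with $Q$, while staying orthogonal and real, leaves only $\pm I_2$ rather than a genuine $2\times2$ rotation. The point is that $Q$ flips only the first of these two coordinates, so it does not commute with a nontrivial rotation in that plane; enforcing $Q$-invariance therefore kills the rotation angle and collapses the block to $\pm I_2$. Verifying that this sign is the \emph{same} $\pm$ appearing in (\ref{phinola}) (i.e. that the two top entries share a sign, consistent with $J$-commutation and orthogonality) is the delicate bookkeeping step, but it is a finite linear-algebra check rather than a conceptual difficulty.
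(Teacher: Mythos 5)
Your argument is correct in substance but organised differently from the paper's. The paper extracts all four constraints from the dressing equations first --- $[J,\Phi]=0$, $\Phi_x=U\Phi-\Phi U_0$, $\Phi_t=0$ and $V\Phi=\Phi V_0$ --- and it is the intertwining relation $V\Phi=\Phi V_0$ (not the Cartan involution) that kills the off-diagonal entry $m_{12}$ of the top $2\times2$ block and simultaneously yields $\beta=\beta_0$, $\vec\alpha=\Omega\vec\alpha_0$ and $\Omega\Omega^T=m_{11}^2I_n$; the symmetries (\ref{Inphi}) enter only at the very end to force reality and $m_{11}=\pm1$. You instead front-load the reduction-group data: $Q\Phi Q=\Phi$ decouples the first coordinate, $[J,\Phi]=0$ then collapses the top block to $m_{11}I_2+m_{12}J_2$ with $m_{12}=0$, and orthogonality plus reality give $m_{11}=\pm1$ and $\Omega\in O(n,\bbbr)$. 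Both routes reach (\ref{phinola}); yours makes the role of each $\mathbb{Z}_2$ symmetry more transparent, while the paper's makes the ``dressed'' solution fall out of the same computation that fixes the form of $\Phi$. One small imprecision to repair: the $\lambda^0$ coefficient of (\ref{chix}) is not $U=\Phi U_0\Phi^{-1}$ but $\Phi_x=U\Phi-\Phi U_0$ (there is a $\Phi_x\Phi^{-1}$ term you dropped). This is harmless --- comparing the bottom-right blocks of that relation gives $\Omega_x=0$ directly, and the $\lambda^0$ coefficient of (\ref{chit}) gives $\Phi_t=0$, after which $\beta=\beta_0$ and $\vec\alpha=\Omega\vec\alpha_0$ follow cleanly from $V=\Phi V_0\Phi^T$ as you indicate --- but as written your appeal to ``$U,V$ retaining the correct form'' papers over the step that actually proves $\Omega$ is constant.
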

\begin{proof} Under the assumption that the dressing matrix $\Phi$ is
independent of the spectral
parameter $\lambda$, it follows from (\ref{chix}) and (\ref{chit}) that
\begin{eqnarray}\label{consd}
[J,\Phi]=0; \quad \Phi_x=U \Phi-\Phi U_0; \quad \Phi_t=0; \quad V \Phi-\Phi V_0=0.
\end{eqnarray}
It follows from the first condition $[J, \Phi]=0$ that the matrix $\Phi$ is of
the form
\begin{eqnarray*}
 \Phi=\left(\begin{array}{ccc} m_{11} & m_{12} & {\bf 0}^T\\-m_{12} &m_{11} &{\bf 0}^T\\ {\bf 0}& {\bf 0} &\Omega\end{array} \right),
\end{eqnarray*}
where $\Omega$ is an $n\times n$ matrix. Substituting it into the last condition 
 in (\ref{consd}), we get
$m_{12}=0$ and
$$ ({\beta}-\beta_0)m_{11}=0; \quad {\vec \alpha}^T \Omega=m_{11} {\vec \alpha_0}^T;\quad 
m_{11}{\vec \alpha} =\Omega {\vec \alpha_0}.
$$
This implies that $\beta=\beta_0$ since $\det \Phi\neq 0$ and $\Omega \Omega^T=m_{11}^2 I_n$,
where $I_n$ is the $n\times n$ identity matrix.
The rest two conditions in (\ref{consd}) imply that matrix $\Phi$ is independent
of $x$ and $t$. 
Finally, since $\Phi$ satisfies (\ref{Inphi}), we have that matrix $\Omega$ is 
real and $m_{11}^2=1$, i.e. $m_{11}=\pm 1$.
\end{proof}

A $\lambda$--dependent dressing matrix $\Phi(\lambda)$, which is
invariant with respect to symmetries (\ref{Inphi}) has poles at the orbits of 
the reduction group.
Simplest ``one soliton'' dressing correspond to the cases when  matrix
$\Phi(\lambda)$ has only simple poles belonging to a single 
orbit. There are two non-trivial cases:
\begin{enumerate}
 \item[(i)] Matrix $\Phi(\lambda)$ has two pure imaginary poles at $\pm i\nu,\ 
\nu\ne 0,\ \nu\in\bbbr$. 
\item[(ii)] Matrix $\Phi(\lambda)$ has four complex poles at $\pm \mu,\ 
\pm\bar{\mu},\ 
\mu=\gamma+i\omega,\  \gamma\omega \ne 0,\ \gamma,\omega\in\bbbr$.
\end{enumerate}
These cases correspond to the ``kink'' and ``breather'' solutions respectively. 
We will show that the case of  two real poles at $\pm\mu\in\bbbr$ leads to a 
trivial dressing.

A more general case in which the poles belong to a finite union of orbits
corresponds to a multi-soliton solution. Using this approach we can construct
explicitly multi-kink-breather solutions 
and analyse the result of kink-kink, breather-breather and kink-breather 
collisions.  

The explicit forms of matrix $\Phi(\lambda)$ corresponding the above two cases
and satisfying the last two conditions in (\ref{Inphi}) are 
\begin{eqnarray*}
({\rm i})&\quad & \Phi(\lambda)=C+\frac{A}{\lambda- \mu}-\frac{QAQ}{\lambda+
\mu}, \quad A=-Q \bar{A} Q, \quad \mu=i\nu; \\
({\rm ii})&\quad&\Phi(\lambda)=C+\frac{A}{\lambda-\mu}-\frac{QAQ}{\lambda+\mu}
+\frac { \bar A}{
\lambda-{\bar \mu} } -\frac{Q{\bar A}Q}{\lambda+{\bar \mu}} ,\quad
\mu=\gamma+i\omega,
\end{eqnarray*}
where $C=QCQ$ is a real matrix of size $(n+2)\times(n+2)$ independent of $\lambda$. 

We now derive the conditions on matrices $A$ and $C$ such that $\Phi(\lambda)$ 
is an invariant dressing matrix, i.e. satisfying conditions (\ref{Inphi}).
First we have
\begin{Pro}\label{prom}
 If $\Phi(\lambda)$ is a dressing matrix satisfying $\Phi(\lambda)\Phi^T(\lambda)=I_{n+2}$, then matrix $C$
 is a constant matrix of the same form as (\ref{phinola}).
\end{Pro}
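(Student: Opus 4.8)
The plan is to identify $C$ as the value of $\Phi$ at infinity and to squeeze out of the hypotheses just enough algebraic constraints to force the block form, treating the $(x,t)$--independence separately at the end. Since all poles of $\Phi$ (the two of case (i) or the four of case (ii)) are finite, we have an expansion $\Phi(\lambda)=C+\lambda^{-1}R+O(\lambda^{-2})$ with $R$ the sum of the residues, so that $C=\lim_{\lambda\to\infty}\Phi(\lambda)$. Evaluating the hypothesis $\Phi(\lambda)\Phi^T(\lambda)=I_{n+2}$ in this limit gives $CC^T=I_{n+2}$, so $C\in O(n+2)$. Moreover $C$ is real and satisfies $QCQ=C$ by construction, and the latter forces $C$ to be block diagonal with respect to the index splitting $\{1\}\cup\{2,\dots,n+2\}$, killing the $(1,2),(2,1)$ entries and all first-row/column mixed entries.

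The second constraint is that $C$ commutes with $J$. To see this I would rewrite (\ref{chix}) as the matrix relation $\Phi_x=\lambda[J,\Phi]+U\Phi-\Phi U_0$, where $[J,\Phi]=J\Phi-\Phi J$; since $\Phi_x$ stays bounded as $\lambda\to\infty$, matching the coefficient of $\lambda^1$ yields $[J,C]=0$. Combining $[J,C]=0$ with the block structure coming from $QCQ=C$ is the decisive step: writing $C$ in the $1\oplus1\oplus n$ block form, a short evaluation of $CJ-JC$ shows that the two mixed $n$--vectors vanish and that the $(1,1)$ and $(2,2)$ entries coincide, so $C=\mathrm{diag}(c,c,C_0)$ for a scalar $c$ and an $n\times n$ block $C_0$. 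Feeding this back into $CC^T=I_{n+2}$ gives $c^2=1$ and $C_0C_0^T=I_n$; reality of $C$ then forces $c=\pm1$ and $C_0\in O(n,\bbbr)$, which is exactly the form (\ref{phinola}) with $\Omega=cC_0$.

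It remains to prove that $C$ is genuinely $(x,t)$--independent rather than merely $\lambda$--independent. For the $t$--dependence I would rewrite (\ref{chit}) as $\Phi_t=\lambda^{-1}(\Phi V_0-V\Phi)$; the right-hand side is $O(\lambda^{-1})$, so the $\lambda\to\infty$ limit gives $C_t=0$. For the $x$--dependence I would return to $\Phi_x=\lambda[J,\Phi]+U\Phi-\Phi U_0$ and read off its $\lambda^0$ coefficient, $C_x=[J,R]+UC-CU_0$. Here I would use that $J$ is supported on the upper-left $2\times2$ corner, so $[J,\,\cdot\,]$ has vanishing bottom-right $n\times n$ block, and that $U,U_0$ (see (\ref{JUV})) have vanishing $(3,3)$ block, so that for block-diagonal $C$ the product $UC-CU_0$ also has vanishing $(3,3)$ block. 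Restricting the identity to that block then gives $(C_0)_x=0$, and together with $C_t=0$ this shows $C_0$, hence $C$, is constant. The main obstacle is the middle step: recognising that the single piece of information $[J,C]=0$, which has to be harvested from the leading large-$\lambda$ asymptotics of the dressing relation rather than from the symmetries (\ref{Inphi}) alone, is precisely what upgrades the crude $1\oplus(n+1)$ splitting to the required $2\oplus n$ splitting.
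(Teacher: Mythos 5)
Your argument is correct and follows essentially the same route as the paper: $[J,C]=0$ from the $O(\lambda)$ term of (\ref{chix}), $QCQ=C$ to kill the off-diagonal entries, the $\lambda\to\infty$ limits of $\Phi\Phi^T=I_{n+2}$ and (\ref{chit}) for $CC^T=I_{n+2}$ and $C_t=0$, and the $\lambda^0$ term of (\ref{chix}) for $C_x=0$. The only (harmless) differences are the order in which you combine $QCQ=C$ with $[J,C]=0$ and your use of $c^2=1$ in place of the paper's block-by-block reading of $C_x=UC-CU_0-[B,J]$ to dispose of the $x$-dependence of the scalar entries.
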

\begin{proof} All identities must be satisfied for all values of $\lambda$.
For (\ref{chix}), the linear terms in $\lambda$ as $\lambda 
\rightarrow \infty$ leads to $[J, C]=0$. So matrix $C$ is of the form
\begin{eqnarray*}
 C=\left(\begin{array}{ccc} c_{11} & c_{12} & {\bf 0}^T\\-c_{12} &c_{11} &{\bf 0}^T\\ {\bf 0}& {\bf 0} &\Omega\end{array} \right),
\end{eqnarray*}
where $\Omega$ is an $n\times n$ matrix. Substituting it into the last condition $C=QCQ$, we get
$c_{12}=0$.

The constant terms of (\ref{chix}) in $\lambda$ as $\lambda \rightarrow \infty$ leads to
\begin{eqnarray}\label{recu0}
 C_x=UC-CU_0-[B, J],
\end{eqnarray}
where $B$ is the coefficient at $\lambda^{-1}$ in the expansion 
$
\Phi(\lambda)=C+B \lambda^{-1}+\cdots
$
as $\lambda\rightarrow \infty$. Substituting the form of $C$ into it, we get $C_x=0$ implying that matrix $C$ is independent of $x$.
For (\ref{chit}) and $\Phi(\lambda)\Phi^T(\lambda)=I_{n+2}$, taking the limit $\lambda\rightarrow \infty$, we get $C_t=0$
and $ C C^T=I_{n+2}$. Thus we proved the statement.
\end{proof}
Note that a composition of two dressing matrices is also a dressing matrix.
Therefore, without loss of generality, from now on, we take $C=I_{n+2}$ in the dressing matrix $\Phi(\lambda)$.

\begin{Pro}\label{pro0}
The dressing matrix satisfies $\Phi(\lambda)\Phi^T(\lambda)=I_{n+2}$ if and only
if
\begin{eqnarray}\label{Akb}
 AA^T=0; \qquad  
 P A^T
+A P^T=0, 
\end{eqnarray}
where
\begin{eqnarray}\label{defp}
 P=\left(\Phi(\lambda)-\frac{A}{\lambda-\mu}\right){\big
|}_{\lambda=\mu}. 
\end{eqnarray}
\end{Pro}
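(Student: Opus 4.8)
The plan is to exploit that $\Phi(\lambda)\Phi^T(\lambda)-I_{n+2}$ is a rational function of $\lambda$ whose only possible singularities lie on the reduction-group orbit of $\mu$, and which tends to $0$ as $\lambda\to\infty$ (since $\Phi\to C=I_{n+2}$). By a Liouville-type argument such a function vanishes identically if and only if the principal part of $\Phi\Phi^T$ at every pole is zero. So the whole statement reduces to a purely local computation of the principal part at a single pole, say $\lambda=\mu$, with the remaining poles handled by the built-in symmetries (\ref{Inphi}).

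First I would record the local Laurent expansion of $\Phi$ at $\lambda=\mu$. By the very definition of $P$ in (\ref{defp}), near $\mu$ one has $\Phi(\lambda)=\frac{A}{\lambda-\mu}+P+O(\lambda-\mu)$, where $P$ is precisely the regular part evaluated at $\mu$. Transposing term by term gives $\Phi^T(\lambda)=\frac{A^T}{\lambda-\mu}+P^T+O(\lambda-\mu)$. Multiplying the two expansions, the double-pole coefficient is $AA^T$ and the simple-pole residue is $AP^T+PA^T$. Hence the principal part at $\mu$ vanishes exactly when $AA^T=0$ and $PA^T+AP^T=0$, which are precisely the conditions (\ref{Akb}).

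Next I would show that vanishing of the principal part at $\mu$ forces it at the remaining poles, using the symmetries that are already part of the ansatz. From $Q\Phi(-\lambda)Q=\Phi(\lambda)$ and $Q=Q^T$ one gets $Q\Phi^T(-\lambda)Q=\Phi^T(\lambda)$, whence $\Phi(\lambda)\Phi^T(\lambda)=Q\,[\Phi(-\lambda)\Phi^T(-\lambda)]\,Q$; thus $\Phi\Phi^T-I_{n+2}$ at $\lambda$ is the $Q$-conjugate of the same expression at $-\lambda$, so the pole at $-\mu$ carries no new information. In case (ii) the reality symmetry $\overline{\Phi(\bar\lambda)}=\Phi(\lambda)$ plays the analogous role for $\bar\mu$, so the full orbit $\{\pm\mu,\pm\bar\mu\}$ collapses to the single pole $\mu$.

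Putting these together gives both directions at once: $\Phi\Phi^T=I_{n+2}$ holds iff every principal part vanishes iff, by symmetry, the principal part at $\mu$ alone vanishes, iff (\ref{Akb}) holds. The one point to watch is the identification of $P$ with the regular part of $\Phi$ at $\mu$: once that is made, the simple-pole residue is read straight off the product of Laurent expansions as $AP^T+PA^T$, with no further manipulation. If instead one expands the explicit four-term $\Phi$ directly, the same residue emerges only after splitting the cross terms $\sim\frac{1}{(\lambda-\mu)(\lambda+\mu)}$ by partial fractions to produce $-\frac{1}{2\mu}(AQA^TQ+QAQA^T)$ and matching it against $AP^T+PA^T$ with $P=I_{n+2}-\frac{QAQ}{2\mu}$. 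Organising the argument through the abstract expansion is what keeps this step routine, so the genuine content of the proposition is just that $P$ was defined to be the regular part of $\Phi$ at $\mu$.
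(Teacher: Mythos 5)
Your proposal is correct and follows essentially the same route as the paper: both treat $\Phi(\lambda)\Phi^{T}(\lambda)$ as a rational function equal to $I_{n+2}$ at infinity, identify the first condition in (\ref{Akb}) with the vanishing of the second-order pole at $\lambda=\mu$ and the second with the vanishing of the residue there, and invoke the reduction-group symmetries to dispose of the remaining poles of the orbit. Your write-up merely makes explicit (via the Laurent expansion $\Phi=\tfrac{A}{\lambda-\mu}+P+O(\lambda-\mu)$) the coefficient computations that the paper leaves implicit.
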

\begin{proof}
The product  $\Phi(\lambda)\Phi^T(\lambda)$ is a rational
matrix function of $\lambda$ and at $\lambda =\infty$ it is equal to the unit matrix. Now
it is sufficient to show that the rest of conditions in (\ref{Akb}) are
equivalent to vanishing all other poles of the product. The first condition of
(\ref{Akb}) is
equivalent to vanishing of the second order pole at $\lambda=\mu$. All other
second order poles of the product vanishes due to the reduction group symmetry.
The residue at $\lambda=\mu$ vanishes if and only if $A$ satisfies the second
condition of (\ref{Akb}).
\end{proof}
Therefore,  the matrix $A$ is degenerate. Here we shall study solution of rank
one, that is,  $\mbox{rank} (A)=1$ and thus represent the matrix $A$ by a bi-vector
\begin{equation}
  A=b\rangle\langle a,\qquad \langle a=(p,q,a^1,\ldots, a^{n}),\qquad  b\rangle
=(b^1,\ldots , b^{n+2})^T\, , \label{bi-vector}
\end{equation}
where $p, q, a^i, b^i\in\bbbc$. We use different notations for the first two 
component of the vector  $\langle a$ in order to emphasise that they play a 
particular role in the solutions. 

Under the assumption that $A$ is a bi-vector,  we rewrite conditions
(\ref{Akb}) for $A$ into conditions for vectors $a\rangle$
and $b\rangle$.
\begin{Pro}\label{pro00}
Let $A=b\rangle\langle a\ne 0$ be a bi-vector. The dressing matrix satisfies
$\Phi(\lambda)\Phi^T(\lambda)=I_{n+2}$ if and only if
\begin{eqnarray}\label{AkbP}
 \langle a a\rangle=0; \qquad  
 P a\rangle=0,
\end{eqnarray}
where $P$ is given by (\ref{defp}).
\end{Pro}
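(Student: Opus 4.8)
The plan is to substitute the rank-one factorisation $A=b\rangle\langle a$ directly into the two matrix conditions (\ref{Akb}) of Proposition \ref{pro0} and read off what each says about the vectors $a\rangle$ and $b\rangle$. Throughout I would use that $A\neq 0$ forces both $a\rangle\neq 0$ and $b\rangle\neq 0$, so that any nonzero rank-one matrix built from these vectors is genuinely nonzero.

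First I would treat the condition $AA^T=0$. Since $A=b\rangle\langle a$, its transpose is $A^T=a\rangle\langle b$, and therefore
\begin{equation*}
 AA^T = b\rangle\langle a\, a\rangle\langle b = \langle a a\rangle\, b\rangle\langle b,
\end{equation*}
where $\langle a a\rangle$ is a scalar that factors out. Because $b\rangle\neq 0$, the matrix $b\rangle\langle b$ is a nonzero (rank one) matrix, so $AA^T=0$ holds if and only if the scalar $\langle a a\rangle$ vanishes. This is exactly the first condition in (\ref{AkbP}).

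Next I would treat $PA^T+AP^T=0$. Abbreviating $c\rangle := P a\rangle$, so that $\langle c=(P a\rangle)^T$, one computes
\begin{equation*}
 PA^T+AP^T = P a\rangle\langle b + b\rangle\langle a P^T = c\rangle\langle b + b\rangle\langle c.
\end{equation*}
Clearly $c\rangle=0$ makes the right-hand side vanish, which gives one direction. For the converse I would argue entrywise: the $(i,j)$ entry of $c\rangle\langle b + b\rangle\langle c$ equals $c^i b^j + b^i c^j$. Putting $i=j$ yields $b^i c^i=0$ for every $i$. Since $b\rangle\neq 0$, choose $k$ with $b^k\neq 0$; then $c^k=0$, and the off-diagonal relation $c^i b^k + b^i c^k = c^i b^k = 0$ forces $c^i=0$ for all $i$. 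Hence $c\rangle=P a\rangle=0$, the second condition in (\ref{AkbP}).

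The main obstacle is precisely this converse in the last step: one must resist the temptation to replace the entrywise argument by an ``inner product'' contraction. Because the vectors are complex and $\langle\cdot,\cdot\rangle$ is the symmetric bilinear (not Hermitian) form, isotropic vectors with $\langle b b\rangle=0$ exist, so contracting $c\rangle\langle b + b\rangle\langle c$ against $b\rangle$ need not isolate $c\rangle$. The entrywise diagonal-then-off-diagonal argument sidesteps this issue entirely and uses only $b\rangle\neq 0$. Combining the two equivalences above with Proposition \ref{pro0} then gives the stated result.
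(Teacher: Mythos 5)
Your proof is correct, and for the second condition it takes a genuinely different (more elementary) route than the paper. The first equivalence is identical: $AA^T=\langle a a\rangle\, b\rangle\langle b$ with $b\rangle\langle b\neq 0$, so $AA^T=0$ iff $\langle a a\rangle=0$. For the second condition the paper argues structurally: it observes that $PA^T=Pa\rangle\langle b$ has rank at most one and, because the relation forces $PA^T=-(PA^T)^T$, its image must also be spanned by $b\rangle$; hence $PA^T=\delta\, b\rangle\langle b$ for some $\delta\in\bbbc$, the sum becomes $2\delta\, b\rangle\langle b=0$, and $\delta=0$ gives $PA^T=0$ and so $Pa\rangle=0$. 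You instead set $c\rangle=Pa\rangle$, write the sum as $c\rangle\langle b+b\rangle\langle c$, and kill $c\rangle$ entrywise: the diagonal entries give $b^ic^i=0$, an index $k$ with $b^k\neq 0$ then gives $c^k=0$, and the $(i,k)$ entries give $c^i=0$ for all $i$. Both arguments come down to the same elementary fact --- a rank-one antisymmetric matrix must vanish --- but your entrywise version is fully explicit where the paper's identification of the image of $PA^T$ with the span of $b\rangle$ is left somewhat terse, and your remark that one cannot shortcut the converse by contracting with $b\rangle$ (since isotropic vectors with $\langle b b\rangle=0$ exist over $\bbbc$) flags a genuine pitfall that the paper's argument also avoids, though silently. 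The two proofs buy comparable things: the paper's is shorter once one accepts the image/co-image claim; yours is self-contained and uses nothing beyond $b\rangle\neq 0$.
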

\begin{proof}
The first equation of (\ref{AkbP}) immediately follows from the first equation of
(\ref{Akb}).
For matrix $A=b\rangle\langle a$ the second condition of (\ref{Akb}) is
equivalent to
\begin{eqnarray*}
  P A^T=0.
\end{eqnarray*}
Indeed, the first term of the sum in the second equation of (\ref{Akb}) is a
bi-vector with image  and  co-image spaces spanned by  $b\rangle$ and  $\langle
b$ respectively and thus it is equal to $\delta b\rangle\langle b$ , that is, 
\[
 PA^T=\delta b\rangle\langle b
\]
for some
$\delta\in\bbbc$. The second term of the sum in the second equation of (\ref{Akb}) is just the
transposition of the first one and thus we get $$P A^T
+A P^T=2\delta b\rangle\langle b=0.$$
Thus  $\delta=0$, that is,  $P A^T=0$ implying   $P a\rangle=0$.
\end{proof}

We now investigate the conditions (\ref{chix}) and (\ref{chit}) that
$\Phi(\lambda)$ must satisfy in order to be a dressing matrix.
Notice that they are rational matrix functions
of $\lambda$. We compare the residues of all poles of both sides.
For (\ref{chix}),  we know that the linear terms in $\lambda$ vanishes as
$\lambda 
\rightarrow \infty$ from Proposition \ref{prom} and 
its constant terms in $\lambda$ as $\lambda \rightarrow \infty$ (\ref{recu0}) becomes 
\begin{eqnarray}\label{recu}
 U=U_0+[B, J] ,
\end{eqnarray}
where $B$ is defined by $\Phi(\lambda)=I_{n+2}+B \lambda^{-1}+\cdots$
as $\lambda\rightarrow \infty$.
We now compute residue of (\ref{chix}) at
$\lambda=\mu$. This leads to
\begin{eqnarray} \label{resi}
 -AJ A^T+A (D_x-\mu
J-U_0)P^T+ P (D_x-\mu 
J-U_0) A^T=0,
\end{eqnarray}
where $A=b\rangle\langle a\ne 0$ satisfying (\ref{AkbP}). In (\ref{resi}) the first term
vanishes since $A$ is a bi-vector and $J$ is a skew-symmetric matrix.  
It follows from vanishing of the second order pole at $\lambda=\mu$ that
\begin{eqnarray}\label{dou}
 A (D_x-\mu J-U_0) A^T=0.
\end{eqnarray}
Thus we can see that 
\begin{eqnarray}\label{ax}
 \cL_0 a\rangle= (D_x-\mu J-U_0) a\rangle=0
\end{eqnarray}
is a solution of (\ref{resi}) and (\ref{dou}) due to (\ref{AkbP}). In fact, the general solution
should be $ \cL_0 a\rangle=\gamma a\rangle$, where $\gamma$ is an
arbitrary scalar function of $x$ and $t$. Without any loss of generality we
set $\gamma=0$. Indeed, the $\gamma$  can be removed by the rescaling $a\rangle$
to $\exp(\int \gamma {\rm d}x) a\rangle$ and it will be shown that
the matrix $A$ is not affected by the scaling of $a\rangle$.

We carry out the similar 
analysis for (\ref{chit}) and obtain that
\begin{eqnarray}
&& V =\Phi(0) V_0 \Phi^T(0); 
\quad (\mbox{residue at $\lambda=0$})\label{recv}\\
&& A(D_t+\mu^{-1} V_0) A^T=0; \quad (\mbox{double pole at 
$\lambda=\mu$})\label{dout}\\
&& -\mu^{-2}A V_0 A^T+A (D_t+\mu^{-1} V_0) P^T+P (D_t+\mu^{-1} V_0) A^T=0.
\quad (\mbox{residue at $\lambda=\mu$})\label{resit} 
\end{eqnarray}
 For $A=b\rangle\langle a$, it is obviously $A V_0 
A^T=0$ since $V_0$ is a skew-symmetric matrix.
In the same way as we get (\ref{ax}) we can see that
\begin{eqnarray}\label{at}
 (D_t+\mu^{-1}V_0) a\rangle=0
\end{eqnarray}
is a solution of (\ref{resit}) and (\ref{dout}).
Thus for given constant vector  $a_1\rangle$ satisfying $\langle a_1 a_1 
\rangle=0$, we obtain that
\begin{eqnarray}\label{veca}
 a\rangle=\Psi_0(x,t,\mu) a_1\rangle .
\end{eqnarray}
To construct the exact solutions for the vector sine-Gordon equation
(\ref{vsG}) using (\ref{recv}) or (\ref{recu}), it is required to determine
the vector $b\rangle$ in the bi-vector $A$ using (\ref{AkbP}). The latter depends
on the choice of the forms for $\Phi(\lambda)$. We will determine the vector
$b\rangle$ for the kink and breather solutions in the following sections.

In the following two sections \ref{Kink} and \ref{sec_br}, we construct the exact solutions starting with the trivial solution $\beta_0=1$
and ${\vec\alpha}_0=0$ for the equation (\ref{vsG}). Then $U_0=0$ and $V_0=J$.
It is easy to see that in this case the fundamental solution for (\ref{naked})
is
\begin{eqnarray}\label{psi0}
\Psi_0(x,t,\lambda)=\exp(\lambda J x-\lambda^{-1} J t)=\left(\begin{array}{ccc} 
\cos \theta&
\sin \theta&0\\ -\sin \theta& \cos \theta&0\\0&0&I\end{array}\right),\quad
\theta=\lambda x-\lambda^{-1} t .
\end{eqnarray}
Matrix $\Psi_0(x,t,\lambda)$ obviously 
satisfies the reduction group symmetry conditions  (\ref{Inphi}).

\subsection{A kink solution}\label{Kink}

Let us assume that the dressing matrix is regular at $\lambda=\infty$ and has 
only two simple poles at points $\lambda=\pm\mu\neq 0$.  The set of poles
is an orbit of the reduction group if $\mu$ is either real or pure
imaginary. Then, without any loss of 
generality, it can be written in the form 
\begin{equation} 
 \Phi(\lambda)=I_{n+2}+\frac{A}{\lambda-\mu}-\frac{QAQ}{\lambda+\mu} .\label{chik}
\end{equation}
For the  bi-vector $A=b\rangle\langle a\ne 0$, we now determine the vector
$b\rangle$
using Proposition \ref{pro00}.

\begin{Pro}\label{pro2}
Let $A=b\rangle\langle a\ne 0$ be a bi-vector . The dressing matrix (\ref{chik})
satisfies $\Phi(\lambda)\Phi^T(\lambda)=I_{n+2}$ if and only if
\begin{eqnarray}\label{abk}
 \langle a a\rangle=0; \qquad
 b\rangle=\frac{2 \mu Q a\rangle}{\langle a Q a \rangle} .
\end{eqnarray}
\end{Pro}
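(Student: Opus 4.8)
The plan is to apply Proposition~\ref{pro00}, which reduces the matrix identity $\Phi\Phi^T = I_{n+2}$ to the two vector conditions $\langle a a\rangle = 0$ and $P a\rangle = 0$, where $P$ is the regular part of $\Phi$ at $\lambda = \mu$. The first condition in \eqref{abk} is already exactly the first condition of \eqref{AkbP}, so the entire content lies in translating $P a\rangle = 0$ into the explicit formula $b\rangle = 2\mu Q a\rangle / \langle a Q a\rangle$. So my first step is simply to compute $P$ from its definition \eqref{defp} for the specific dressing matrix \eqref{chik}.

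First I would subtract the singular term $A/(\lambda-\mu)$ from $\Phi(\lambda)$ and evaluate the remainder at $\lambda = \mu$. For the kink ansatz \eqref{chik} this gives
\begin{eqnarray*}
 P = \left(I_{n+2} - \frac{QAQ}{\lambda+\mu}\right)\Big|_{\lambda=\mu}
   = I_{n+2} - \frac{QAQ}{2\mu}.
\end{eqnarray*}
Next I would impose $P a\rangle = 0$, i.e.
\begin{eqnarray*}
 a\rangle = \frac{1}{2\mu}\, Q A Q\, a\rangle.
\end{eqnarray*}
Substituting the bi-vector form $A = b\rangle\langle a$ turns the right-hand side into $\frac{1}{2\mu} Q b\rangle \langle a Q a\rangle$, since $\langle a Q a\rangle$ is the scalar produced by $\langle a$ acting on $Q a\rangle$. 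Hence $a\rangle = \frac{\langle a Q a\rangle}{2\mu}\, Q b\rangle$, and applying $Q$ to both sides (using $Q^2 = I_{n+2}$) yields $b\rangle = \frac{2\mu}{\langle a Q a\rangle}\, Q a\rangle$, which is precisely the claimed formula. This establishes the ``only if'' direction, provided $\langle a Q a\rangle \neq 0$.

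For the converse, I would substitute the formula for $b\rangle$ back and check that $P a\rangle = 0$ holds identically, which is just reversing the algebra above; combined with the hypothesis $\langle a a\rangle = 0$, Proposition~\ref{pro00} then delivers $\Phi\Phi^T = I_{n+2}$. The main obstacle, and the point I would treat carefully, is the nondegeneracy $\langle a Q a\rangle \neq 0$: the formula for $b\rangle$ only makes sense when this scalar is nonzero, so I must argue that it cannot vanish for a genuine (nonzero, rank-one) dressing. I expect this to follow from combining $\langle a a\rangle = 0$ with the reality and symmetry structure imposed by \eqref{Inphi}; writing $\langle a = (p,q,a^1,\ldots,a^n)$ as in \eqref{bi-vector}, the conditions $\langle a a\rangle = 0$ and $Q = \mathrm{diag}(-1,1,\ldots,1)$ force $\langle a Q a\rangle = -2p^2$ (after using $\langle a a\rangle=0$), which is nonzero exactly when $p \neq 0$; the case $p = 0$ would make $a\rangle$ incompatible with a nontrivial kink and can be excluded as leading to the trivial dressing already flagged in the text. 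I would verify this component computation explicitly, as it is where the geometry of the symmetric space $Q$-reduction actually enters.
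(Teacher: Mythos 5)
Your argument is correct and follows essentially the same route as the paper: invoke Proposition~\ref{pro00}, compute $P = I_{n+2} - QAQ/(2\mu)$ for the two-pole ansatz, and solve $P a\rangle = 0$ for $b\rangle$ using $Q^2 = I_{n+2}$. Your only addition is the explicit treatment of the nondegeneracy $\langle a Q a\rangle \neq 0$, which the paper leaves implicit; note it also follows at once from $P a\rangle = 0$ itself, since $\langle a Q a\rangle = 0$ would force $a\rangle = 0$, contradicting $A \neq 0$.
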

\begin{proof}
 The first equation of (\ref{abk}) has been proved in Proposition \ref{pro00}.
For matrix $A=b\rangle\langle a$ the second condition of (\ref{AkbP}) is
$$
 (I_{n+2}-\frac{Q b\rangle\langle a Q}{2\mu}) a \rangle=0, \quad \mbox{that is},
\quad a \rangle=\frac{Q b\rangle\langle a Q  a \rangle}{2\mu}.
$$
This leads to the second equation of (\ref{abk}) since $Q^2=I_{n+2}$.
\end{proof}
\begin{Rem}\label{rem0}
 It follows from this proposition that the bi-vector $A=b\rangle\langle a$ is parametrised by a subspace
spanned by vector $a\rangle$ rather than vector $a\rangle$ itself. Indeed,
scaling $a\rangle \mapsto \hat \gamma  a\rangle$ does not change
matrix $A$.
\end{Rem}

\begin{Rem}\label{rem1}
As we mention above, there are two possibilities for a two points orbit, namely 
$\mu=\pm \nu$ and $ \mu=\pm i\nu, \ \nu\in \bbbr$. In the first 
case to ensure that the solutions are real we require that the matrix $A$ is
real and consequently the vector $a\rangle$ is proportional to a real vector
$a\rangle=\hat{\gamma} \hat a\rangle$, where $\hat a \rangle\in
\mathbb{R}^{n+2}$. For real $\hat a\rangle$ it follows 
from the condition $\langle a a\rangle=0$ that $\hat a\rangle=0$ and so $A=0$. Thus
the case $\mu=\pm \nu, \ \nu\in \bbbr$ leads to a trivial result. 

We shall see that the case $ \mu=\pm i\nu, \ \nu\in \bbbr$ and
$\bar A=-Q A Q$ yields a nontrivial solution.
\end{Rem}

To get the kink solution similar to the one for the 
classical sine-Gordon equation, we take $\mu$ to be pure imaginary,
that is, $\mu=i \nu$, where $\nu\in \mathbb{R}$ as we stated in Remark \ref{rem1}. 

\begin{Pro}\label{pro4}
A kink solution of the vector sine-Gordon system (\ref{vsG})
on a trivial background ($\beta_0=1,\ \alpha^k_0=0$) is given by
\begin{eqnarray}\label{alpha0k}
\beta&=&1-2 (p_1^2-q_1^2)(p_1\cosh\rho+q_1\sinh\rho)^{-2} ,\\ 
\label{alphaik}
{\vec \alpha}&=&2  \ba_1
(p_1\sinh\rho+q_1\cosh\rho)(p_1\cosh\rho+q_1\sinh\rho)^{-2} ,\qquad
i=1,\ldots ,n\, .
\end{eqnarray}
Here $\rho=\nu x+\nu^{-1} t$ with $\nu\in \mathbb{R}$ and  $\langle a_1=
(i p_1,q_1, \ba_1^T)$, where $p_1, q_1 \in \mathbb{R}$ and
$\ba_1\in\mathbb{R}^n$  is a constant vector satisfying
$\langle a_1 a_1\rangle=0$. 
\end{Pro}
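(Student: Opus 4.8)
The plan is to push the explicit data of the two pure-imaginary-pole case through the one-soliton machinery assembled above and then read $\beta,\vec\alpha$ off the residue formula (\ref{recv}). On the trivial background $U_0=0,\ V_0=J$ the formula $V=\Phi(0)V_0\Phi^T(0)$ is preferable to the $U$-route (\ref{recu}): by (\ref{JUV}) the matrix $V$ carries $\beta$ and $\vec\alpha$ directly in its first row, whereas $U$ would only return $\vec\alpha_x/\beta$ and force an integration. So I would proceed in three stages: build $a\rangle$, obtain $b\rangle$ and hence $\Phi(0)$ via Proposition~\ref{pro2}, and finally evaluate $V=\Phi(0)J\Phi^T(0)$.

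First I would set $\mu=i\nu$ in $a\rangle=\Psi_0(x,t,\mu)a_1\rangle$ from (\ref{veca}). Since $\theta=\mu x-\mu^{-1}t=i(\nu x+\nu^{-1}t)=i\rho$, the entries of $\Psi_0$ in (\ref{psi0}) become hyperbolic, $\cos\theta=\cosh\rho$ and $\sin\theta=i\sinh\rho$, so with $\langle a_1=(ip_1,q_1,\ba_1^T)$ the matrix product gives
\begin{equation*}
\langle a=(if,\ g,\ \ba_1^T),\qquad f=p_1\cosh\rho+q_1\sinh\rho,\quad g=p_1\sinh\rho+q_1\cosh\rho.
\end{equation*}
The constraint $\langle a_1a_1\rangle=0$ reads $|\ba_1|^2=p_1^2-q_1^2$, and using $\cosh^2\rho-\sinh^2\rho=1$ one checks $\langle aa\rangle=g^2-f^2+|\ba_1|^2=0$ automatically, as Proposition~\ref{pro00} requires; the same computation records the identity $f^2-g^2=p_1^2-q_1^2$ needed at the end.

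Next I would compute $\langle aQa\rangle$ with $Q=\mathrm{diag}(-1,1,\dots,1)$: the first coordinate flips sign to contribute $f^2$, giving $\langle aQa\rangle=f^2+g^2+|\ba_1|^2=2f^2$. Then (\ref{abk}) yields $b\rangle=\mu Qa\rangle/f^2$, and forming $A=b\rangle\langle a$ together with $\Phi(0)=I_{n+2}-\mu^{-1}(A+QAQ)$ I expect the decisive structural collapse: in $A+QAQ$ every entry coupling the first coordinate to the remaining $n+1$ vanishes, while the surviving part is twice the lower-right $(n+1)\times(n+1)$ block of $A$. Hence $\Phi(0)$ block-diagonalises as $\mathrm{diag}(-1,\tilde\Phi)$ with
\begin{equation*}
\tilde\Phi=I_{n+1}-\frac{2}{f^2}\,ww^T,\qquad w=(g,\ba_1^T)^T\in\bbbr^{n+1},
\end{equation*}
and because $|w|^2=g^2+|\ba_1|^2=f^2$ the block $\tilde\Phi$ is an honest Householder reflection, hence orthogonal and symmetric.

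With this in hand the last step is mechanical. Splitting $J$ and $\Phi(0)=\mathrm{diag}(-1,\tilde\Phi)$ in the same $1\oplus(n+1)$ block decomposition, the off-diagonal block of $V=\Phi(0)J\Phi^T(0)$ that by (\ref{JUV}) carries $(\beta,\vec\alpha^T)$ turns out to be minus the first row of $\tilde\Phi$, whose entries are $-(1-2g^2/f^2)$ and $2g\,a_1^m/f^2$. This gives $\beta=-1+2g^2/f^2$ and $\alpha^m=2g\,a_1^m/f^2$; rewriting $g^2=f^2-(p_1^2-q_1^2)$ recasts the former as $\beta=1-2(p_1^2-q_1^2)f^{-2}$, which is exactly (\ref{alpha0k})--(\ref{alphaik}). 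I expect the main difficulty to be organisational rather than conceptual: keeping the rank-one bookkeeping of $A=b\rangle\langle a$ under control long enough to expose the two facts that drive everything, namely the decoupling of the first coordinate in $A+QAQ$ and the identity $|w|^2=f^2$ that promotes $\tilde\Phi$ to a reflection. Those two facts are precisely what guarantee, with no extra labour, both that $V$ keeps the exact skew structure of (\ref{JUV}) and that the resulting fields obey $\beta^2+|\vec\alpha|^2=1$.
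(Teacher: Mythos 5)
Your proposal is correct and follows essentially the same route as the paper: evaluate $a\rangle=\Psi_0(x,t,i\nu)a_1\rangle$, compute $\langle aQa\rangle=2f^2$, obtain $b\rangle$ from Proposition~\ref{pro2}, and read $\beta,\vec\alpha$ off $V=\Phi(0)J\Phi^T(0)$ via (\ref{recv}). The only difference is presentational — you make explicit the final matrix multiplication (the block-diagonalisation of $\Phi(0)$ into $-1$ and a Householder reflection) that the paper leaves as "we can now write down the matrix $A$" — and your computations check out.
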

\begin{proof}
When $\mu=i\nu$, the fundamental solution $\Psi_0$ given by (\ref{psi0}) becomes
\begin{eqnarray} \label{psi0kink}
\Psi_0(x,t,i\nu)=\left(\begin{array}{ccc} \cosh \rho&
i\sinh \rho&0\\ -i\sinh \rho& \cosh \rho&0\\0&0&I\end{array}\right),\quad 
\rho=\nu x+\nu^{-1} t .
\end{eqnarray}
It follows from (\ref{veca}) that
\begin{eqnarray*}
 a\rangle=\Psi_0(x,t, i \nu) a_1\rangle=(i(p_1\cosh\rho+q_1\sinh\rho), 
p_1\sinh\rho+q_1\cosh\rho, \ba_1^T)^T.
\end{eqnarray*}
Thus
\begin{eqnarray*}
 \langle 
aQa\rangle=2(p_1\cosh\rho+q_1\sinh\rho)^2 .
\end{eqnarray*}
It follows from Proposition \ref{pro2} that
\begin{eqnarray*}
 b\rangle=\frac{i
\nu}{(p_1\cosh\rho+q_1\sinh\rho)^2}(-i(p_1\cosh\rho+q_1\sinh\rho), 
p_1\sinh\rho+q_1\cosh\rho, \ba_1^T)^T
\end{eqnarray*}
We can now write down the matrix $A$. Further we use 
formula (\ref{recv}) with $V_0=J$ and get (\ref{alphaik}) and (\ref{alpha0k})
for $\vec \alpha$ and $\beta$ as in the statement.
\end{proof}
One kink solution (\ref{alphaik}) belongs to a one dimensional subspace (it is 
proportional to the vector $\ba_1$) and thus it can be 
easily reconstructed from the kink solution of the scalar sine-Gordon equation. 
Indeed, if $\theta$ is a kink solution of equation $\theta_{xt}=\sin\theta$, then
\[\beta=\cos\theta,\qquad \vec{\alpha}=\frac{\ba_1}{|\ba_1|}\sin\theta\]
is the corresponding solution of the vector sine-Gordon equation. 

Later we 
shall show that a general two kink solution belongs to two dimensional subspace 
$\mbox{Span}_\bbbr(\ba_1,\ba_2)$ and it cannot be obtained from 
solutions of the scalar sine-Gordon equation.
In particular, for two-kink solutions we compute the position shifts after
collision.

For one kink solution in Proposition \ref{pro4}, we rewrite their denominator in
the
following form:
\begin{eqnarray}\label{posi}
(p_1\cosh\rho+q_1\sinh\rho)^2
=\left(\frac{p_1^2-q_1^2}{2}\right)(\cosh \left(2\nu (x+\nu^{-2} t-
x_0)\right)+1),\
x_0=\frac{1}{2\nu} \ln \big{|}\frac{p_1-q_1}{p_1+q_1}\big{|}\ .
\end{eqnarray}
Thus the kink is moving on the $(x, t)$-plane along the straight line and its
position at time $t$ is 
\begin{eqnarray*}
 x=-\nu^{-2} t+ x_0,
\end{eqnarray*}
which is only dependent on the pole $\mu=i \nu$  and the first two components of
vector 
$a_1\rangle$. The width of the kink is  of the order $\nu^{-1}$.

\subsection{A breather solution}\label{sec_br}

A breather solution corresponds to the only simple poles at points of a generic
orbit of the reduction group.
Assuming that the dressing matrix has a simple pole at a point
$\mu=\gamma+i\omega\, ,
\ \mu\ne \pm \bar{\mu}$, where we use $\bar \mu$ to denote the complex conjugate of $\mu$, we have
\begin{equation}
 \Phi(\lambda)=I_{n+2}+\frac{A}{\lambda-\mu}-\frac{QAQ}{\lambda+\mu}+\frac{\bar A}{
\lambda-{\bar \mu} } -\frac{Q{\bar A}Q}{\lambda+{\bar \mu}}\ .\label{Phi}
\end{equation}
Assume that $A=b\rangle\langle a\ne 0$. We determine the vector $b\rangle$
using Proposition \ref{pro00} and construct the solution on a trivial background ($\beta_0=1,\ {\vec \alpha}_0=0$).

\begin{Pro}\label{prob} Let $A=b\rangle\langle a\ne 0$ be a bi-vector, where 
$\langle a$ and $\langle b$ are two complex vectors given by
$\langle a=(p, q, a^1, \cdots, a^n)$ and $\langle b=(b^1, b^2,
\cdots, b^{n+2})$ respectively. The dressing matrix (\ref{Phi})
satisfies $\Phi(\lambda)\Phi^T(\lambda)=I_{n+2}$ if and only if
\begin{eqnarray}\label{aa}
 && \langle aa\rangle=0,\\ \label{b1}&& 
b^1=\frac{2i\gamma\omega\mu {\bar p}}{\Delta},\\ \label{b2}
 && b^2=\frac{2i\gamma\omega\mu}{\Delta^2}\left(
{\bar q} ({\bar \mu}^{2} \langle {\bar a} a \rangle+2\omega^2|p|^2)-4\gamma\omega
{\bar p}\im(p {\bar q})\right),\\ \label{bk}
 && b^{k+2}=\frac{2i\gamma\omega\mu}{\Delta^2}\left(
{\bar a}^k ({\bar \mu}^{2} \langle {\bar a} a \rangle+2\omega^2|p|^2)-4\gamma\omega
{\bar p} \im(p {\bar a}^k)\right),\quad k=1,\ldots,n\, ,
\end{eqnarray}
where
\begin{eqnarray}\label{defD}
 \Delta= |\mu|^2 \langle {\bar a} a \rangle-2\omega^2 |p|^2\, .
\end{eqnarray}
\end{Pro}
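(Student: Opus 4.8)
The plan is to invoke Proposition \ref{pro00}, which for the rank-one ansatz $A=b\rangle\langle a$ reduces the matrix identity $\Phi(\lambda)\Phi^T(\lambda)=I_{n+2}$ to the two conditions $\langle a a\rangle=0$ and $Pa\rangle=0$, with $P$ as in (\ref{defp}). The first is exactly (\ref{aa}), so the whole content is to solve $Pa\rangle=0$ for $b\rangle$ and show that the unique solution is (\ref{b1})--(\ref{bk}). First I would compute $P$ for the breather matrix (\ref{Phi}): subtracting the pole at $\lambda=\mu$ and evaluating the remainder there gives
\[
P = I_{n+2} - \frac{QAQ}{2\mu} + \frac{\bar A}{\mu-\bar\mu} - \frac{Q\bar AQ}{\mu+\bar\mu} = I_{n+2} - \frac{QAQ}{2\mu} + \frac{\bar A}{2i\omega} - \frac{Q\bar AQ}{2\gamma},
\]
using $\mu-\bar\mu=2i\omega$ and $\mu+\bar\mu=2\gamma$. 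Since $A=b\rangle\langle a$, each term applied to $a\rangle$ yields a scalar times $Qb\rangle$, $\bar b\rangle$ or $Q\bar b\rangle$, so $Pa\rangle=0$ becomes
\[
a\rangle = \frac{\langle aQa\rangle}{2\mu}Qb\rangle - \frac{\langle\bar a a\rangle}{2i\omega}\bar b\rangle + \frac{\langle\bar aQa\rangle}{2\gamma}Q\bar b\rangle.
\]
At this stage I would record, using (\ref{aa}) and $Q=\mbox{diag}(-1,1,\ldots,1)$, the simplifications $\langle aQa\rangle=-2p^2$ and $\langle\bar aQa\rangle=\langle\bar a a\rangle-2|p|^2$, and note that $\langle\bar a a\rangle$ is real.

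The decisive observation is that $Q$ is diagonal, so this vector equation splits component by component, the only difference between the first slot and the rest being a sign from $Q$. For each index the equation couples a component of $b\rangle$ to its complex conjugate; adjoining the complex-conjugate equation turns it into a $2\times2$ real-linear system for $(b^{(j)},\overline{b^{(j)}})$, whose coefficient matrix is the same (up to the $Q$-sign in the first slot) for every component. Solving by Cramer's rule, the common determinant collapses --- this is the one genuinely laborious step --- to
\[
D = \frac{|p|^4}{|\mu|^2} - \frac{\langle\bar a a\rangle^2}{4\omega^2} - \frac{\langle\bar aQa\rangle^2}{4\gamma^2} = -\frac{\Delta^2}{4\gamma^2\omega^2|\mu|^2},
\]
the reduction of the numerator to $-(|\mu|^2\langle\bar a a\rangle-2\omega^2|p|^2)^2=-\Delta^2$ being precisely where $\Delta$ of (\ref{defD}) enters. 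Back-substituting and rewriting the conjugate-linear combination through $\im(p\bar a^k)$ then produces (\ref{b2}) and (\ref{bk}) for the components of index $\geq 2$.

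The first component is handled identically but deserves to be isolated: the $Q$ sign flip together with $\langle aQa\rangle=-2p^2$ makes its numerator reduce to $\bar p$ times $-i\Delta/(2\gamma\omega\bar\mu)$, so one power of $\Delta$ cancels against $D$ and, using $|\mu|^2/\bar\mu=\mu$, one is left with $b^1=2i\gamma\omega\mu\bar p/\Delta$, i.e.\ (\ref{b1}) --- note the single power of $\Delta$ in contrast to the other components. The main obstacle throughout is bookkeeping: keeping the action of $Q$, the conjugations, and the reality of $\langle\bar a a\rangle,\langle\bar aQa\rangle$ straight so that the numerators telescope into the stated closed forms, while tacitly assuming $\Delta\neq0$ so the systems are invertible (the degenerate case $p=0$ is consistent, giving $b^1=0$). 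Finally, since the conjugate equation was derived from the original, uniqueness of the $2\times2$ solution guarantees that the computed $\overline{b^{(j)}}$ is automatically the conjugate of $b^{(j)}$, so the solution of $Pa\rangle=0$ is exactly (\ref{b1})--(\ref{bk}), which establishes the ``if and only if''.
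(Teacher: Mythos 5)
Your proposal is correct and follows the same route as the paper: reduce via Proposition \ref{pro00} to $\langle aa\rangle=0$ and $Pa\rangle=0$, write the latter out as the linear system (\ref{ab}), and solve for the components of $b\rangle$. The paper simply states ``we solve this linear system''; your write-up supplies the details (the conjugate-pair $2\times2$ systems, the determinant $-\Delta^2/(4\gamma^2\omega^2|\mu|^2)$, the collapse of the $b^1$ numerator), and these all check out.
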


\begin{proof} Equation (\ref{aa}) has been proved in Proposition \ref{pro00}.
For matrix $A=b\rangle\langle a$ the second condition of (\ref{AkbP}) is
\begin{equation*}\label{Aeq}
 P a\rangle=(I_{n+2}-\frac{QAQ}{2\mu}+\frac{{\bar A}}{\mu-{\bar \mu}}
-\frac{Q{\bar A}Q}{\mu+{\bar \mu}})a\rangle=0\, ,
\end{equation*}
that is, 
\begin{equation}\label{ab}
a\rangle-\frac{Qb\rangle\langle
aQa\rangle}{2\mu}+\frac{{\bar b}\rangle\langle
{\bar a} a\rangle}{\mu-{\bar \mu}}
-\frac{Q{\bar b}\rangle\langle
{\bar a}Qa\rangle}{\mu+{\bar \mu}}=0.
\end{equation}
We solve this linear system of equations (\ref{ab}) for unknowns 
$b^1,\ldots 
,b^{n+2}$
by taking into account (\ref{aa}) and obtain the formulas in the statement.
\end{proof}


\begin{Pro}\label{pro8} A breather solution of the vector sine-Gordon system
(\ref{vsG})
on a trivial background ($\beta_0=1,\ {\vec \alpha}_0=0$) is given by
\begin{eqnarray}\label{alpha0b}
\beta&=&1-\frac{16\gamma^2\omega^2}{\Delta^2}\left( \langle {\bar a} a 
\rangle |q|^2-2(\im(p {\bar q}))^2\right)\, ,\\ \label{alphakb}
\alpha^k&=&\frac{8\gamma\omega}{\Delta^2} \left(\im(({\bar \mu}^{2}\langle {\bar
a} a 
\rangle +2\omega^2 |p|^2)q{\bar
a}^{k})-4\gamma\omega\im(q {\bar p})\im(p{\bar a}^k)\right)\, 
,\qquad
k=1,\ldots ,n\, ,
\end{eqnarray}
where $ \Delta$ is given by (\ref{defD}) and the 
components of the vectors $\langle a= (p,q,a^1,\ldots,
a^{n})$ are the following functions of $x,t$ given by
\begin{equation}\begin{array}{lll}
                 p&=&\cos (\phi +i \eta) p_1+\sin( \phi +i \eta)q_1,\\
q&=&\cos (\phi +i \eta) q_1-\sin( \phi +i \eta)p_1,\\
a^k&=&a_1^k, \qquad k=1,\ldots ,n,
                \end{array}\label{pqxt}
\end{equation}
where $\phi=\gamma x-\frac{\gamma}{(\gamma^2+\omega^2)}t, \ \eta=\omega
x+\frac{\omega}{(\gamma^2+\omega^2)}t$, vector  $\langle a_1=
(p_1,q_1,a_1^1,\ldots,a_1^{n})$ is an arbitrary complex
constant vector satisfying the condition $\langle a_1 a_1\rangle=0$.
\end{Pro}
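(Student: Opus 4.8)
The plan is to push the vector $a\rangle$ and the rank-one matrix $A=b\rangle\langle a$ through the residue identity (\ref{recv}), $V=\Phi(0)\,J\,\Phi^T(0)$ with $V_0=J$, and simply read off $\beta$ and $\vec\alpha$ from the $(1,2)$ and $(1,k+2)$ entries of $V$ dictated by the block form of $V$ in (\ref{JUV}). First I would substitute $\mu=\gamma+i\omega$ into (\ref{psi0}) and apply (\ref{veca}): since $\theta=\mu x-\mu^{-1}t=\phi+i\eta$ with $\phi,\eta$ exactly as in the statement, the action of $\Psi_0(\mu)$ on the constant vector $a_1\rangle=(p_1,q_1,\ba_1^T)^T$ reproduces the rotation formulas (\ref{pqxt}) for $p,q$ and leaves $a^k=a_1^k$ fixed. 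The constraint $\langle a a\rangle=\langle a_1 a_1\rangle=0$ survives because $\Psi_0$ obeys the symmetries (\ref{Inphi}), i.e. $\Psi_0(\mu)^T\Psi_0(\mu)=I_{n+2}$.

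The decisive structural simplification comes from the Cartan involution: since $Q$ changes the sign of only the first coordinate, the $(i,j)$ entry of $A+QAQ$ equals $b^i(\langle a)_j(1+\epsilon_i\epsilon_j)$ with $\epsilon_1=-1$ and $\epsilon_i=1$ for $i\ge 2$, and therefore vanishes whenever exactly one of $i,j$ equals $1$. Evaluating $\Phi(0)=I_{n+2}-\mu^{-1}(A+QAQ)-\bar\mu^{-1}(\bar A+Q\bar AQ)$ from (\ref{Phi}), I would then find that its first column is $(\Phi(0)_{11},0,\ldots,0)^T$ and its second column is $(0,\Phi(0)_{22},\Phi(0)_{32},\ldots)^T$. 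Feeding these into $V=\Phi(0)J\Phi^T(0)$ collapses the entries to $V_{il}=\Phi(0)_{i1}\Phi(0)_{l2}-\Phi(0)_{i2}\Phi(0)_{l1}$, so that $\beta=V_{12}=\Phi(0)_{11}\Phi(0)_{22}$ and $\alpha^k=V_{1,k+2}=\Phi(0)_{11}\Phi(0)_{k+2,2}$, with the cross terms dropping out because $\Phi(0)_{12}=\Phi(0)_{21}=0$.

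Next I would insert the explicit $b^1$ from (\ref{b1}). Because $\Delta$ in (\ref{defD}) is real — indeed $\langle\bar a a\rangle=\sum_j|(\langle a)_j|^2\ge 0$ — the quantity $\mu^{-1}b^1p=2i\gamma\omega|p|^2/\Delta$ is purely imaginary, so the two mutually conjugate terms in $\Phi(0)_{11}=1-2\mu^{-1}b^1p-2\bar\mu^{-1}\bar b^1\bar p$ cancel and $\Phi(0)_{11}=1$. Hence $\beta=\Phi(0)_{22}=1-4\,\re(\mu^{-1}b^2q)$ and $\alpha^k=\Phi(0)_{k+2,2}=-4\,\re(\mu^{-1}b^{k+2}q)$, and it remains only to substitute (\ref{b2}) and (\ref{bk}). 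Writing $\re(iz)=-\im z$, using $\im(\bar\mu^2)=-2\gamma\omega$ from $\bar\mu^2=\gamma^2-\omega^2-2i\gamma\omega$, and $\im(q\bar p)=-\im(p\bar q)$, these real parts reduce precisely to (\ref{alpha0b}) and (\ref{alphakb}).

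The main obstacle is bookkeeping rather than conceptual: one must keep the $Q$-symmetrisation exact so that the off-block entries of the first two columns of $\Phi(0)$ genuinely vanish, and must track the imaginary parts of the $b$-components carefully when extracting the real combinations. In particular, recognising that the reality of $\Delta$ forces $\Phi(0)_{11}=1$ is what removes any overall scalar factor from $\beta$ and $\vec\alpha$; once that identity is in hand, the final simplification of (\ref{b2})--(\ref{bk}) under $\im$ is direct.
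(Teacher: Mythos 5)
Your proposal is correct and follows essentially the same route as the paper's proof: obtain $a\rangle$ from $\Psi_0(\mu,x,t)a_1\rangle$, evaluate $V=\Phi(0)J\Phi^T(0)$, use the reality of $\Delta$ to get $\re(\bar\mu p b^1)=0$ (equivalently your $\Phi(0)_{11}=1$), and then substitute (\ref{b2})--(\ref{bk}). Your explicit block-structure argument for $A+QAQ$ merely spells out why $V_{1,2}$ and $V_{1,k+2}$ factor as products of entries of $\Phi(0)$, a step the paper states without elaboration.
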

\begin{proof}
When $\mu=\gamma+i\omega$, the fundamental solution $\Psi_0$ given by (\ref{psi0}) becomes
\begin{eqnarray}\label{psi0b} 
\Psi_0(\mu, x,t)=\left(\begin{array}{ccc} \cos (\phi +i \eta) &
i\sin (\phi +i \eta)&0\\ -i\sin (\phi +i \eta)& \cos (\phi +i \eta)&0\\0&0&I\end{array}\right),
\end{eqnarray}
where $\phi=\gamma \left(x-(\gamma^2+\omega^2)^{-1}t\right), \ \eta=\omega
\left(x+(\gamma^2+\omega^2)^{-1}t\right)$.
Using (\ref{veca}) we get (\ref{pqxt}) for $a\rangle$.

It follows from (\ref{recv}) that
\[
 V=\Phi(0) J \Phi^{T}(0),
\]
where $\Phi(\lambda)$ is defined in (\ref{Phi}). Making substitution of 
$A=b\rangle\langle a$ in (\ref{Phi}) and taking into account  (\ref{JUV}) we 
obtain
\begin{eqnarray}
&&\beta=V_{1,2}=\frac{1}{|\mu|^4}
(|\mu|^2-4\re({\bar \mu} p b^1))(|\mu|^2-4\re({\bar \mu} q b^2));\\
&&\alpha_k=V_{1,k+2}=-\frac{4}{|\mu|^4}
(|\mu|^2-4\re({\bar \mu} p b^1))\re({\bar \mu} q b^{k+2}),\quad k=1,\cdots, n.
\end{eqnarray}
From (\ref{b1}) in Proposition \ref{prob} it follows that  
\[
 \re({\bar \mu} p b^1)=\re\left({\bar \mu} p\frac{2i\gamma\omega\mu {\bar
p}}{\Delta}\right)=0.
\]
Thus $(|\mu|^2-4\re({\bar \mu} pb^1))=|\mu|^2$, and therefore
\begin{eqnarray}
\beta&=&1-\frac{4}{|\mu|^2}\re({\bar \mu} qb^2)\, ,\\ 
\alpha_k&=&-\frac{4}{|\mu|^2} \re({\bar \mu} qb^{k+2})\, ,\qquad
k=1,\ldots ,n\, .
\end{eqnarray}
The substitution of (\ref{b2}) and  (\ref{bk}) in the above equations leads to
(\ref{alpha0b}) and (\ref{alphakb}). 
\end{proof}

The breather solution (\ref{alpha0b}), (\ref{alphakb}) represents a
periodically 
oscillating localized wave of a characteristic width $\sim \omega^{-1}$  moving 
with the constant velocity $(\gamma^2+\omega^2)^{-1}$. Thus the width of the
wave, its speed 
and frequency of oscillations depend on the position of complex pole $\mu$ only.
The location of the wave  and the phase of its
oscillations 
are determined by $p_1,q_1$, i.e. the first two entries of the vector 
$a_1\rangle$. These can be found using the denominator $ 
\Delta$ defined by (\ref{defD}) as follows:
\begin{eqnarray*}
&&\Delta=|\mu|^2(\langle 
{\bar a}_1
a_1\rangle-|p_1|^2-|q_1|^2)+\gamma^2(|p_1|^2+|q_1|^2)\cosh(2\eta)-2\gamma^2 
\im({\bar p}_1 q_1)\sinh(2\eta) +\\
&&+\omega^2 
(|q_1|^2-|p_1|^2)\cos(2\phi)-2\omega^2\re(p_1\bar{q}_1)\sin(2\phi)\\
&&=(\gamma^2+\omega^2)\sum_{k=1}^n |a^k_1|^2
+|p_1^2+q_1^2|\left(\gamma^2\cosh 
(2\eta-2\eta_0)-\omega^2 \cos(2 \phi-\phi_0)\right), 
\end{eqnarray*}
where 
\begin{eqnarray}
\eta_0=\frac{1}{2}\ln \big{|}\frac{p_1-q_1 i}{p_1+q_1 i}\big{|}
\label{breapo}
\end{eqnarray}
and $\phi_0$ satisfies
$\cos \phi_0=\frac{|p_1|^2-|q_1|^2}{|p_1^2+q_1^2|}$ and $\sin \phi_0=\frac{2\re(p_1\bar{q}_1)}{|p_1^2+q_1^2|}$.
Thus we have
\begin{eqnarray}\label{breaph}
\phi_0=\arg \left(\frac{p_1+q_1 i}{p_1-q_1 i} \right)\ . 
\end{eqnarray}
Thus the breather is moving on the $(x, t)$-plane along the straight line and
its position at time $t$ is given by
\begin{eqnarray*}
 x=-(\gamma^2+\omega^2)^{-1} t+ x_0,\quad x_0=\omega^{-1} \eta_0 ,
\end{eqnarray*}
which is only dependent on the position of the pole $\mu$  and the first two
components of vector 
$a_1\rangle$.  We define the phase of this breather as $\phi_0$.
In Figure \ref{fig1}, we plot one breather solution when $n=2$ for
$\alpha_1$ and $\alpha_2$ with $\mu=2+\frac{1}{4} i$ and
$\langle a_1=(i, 2, 3i, \sqrt{6})$ at time $t=3$. The position of this breather 
is
at $x\simeq-2.936$ and its phase $\phi_0=0$.

The breather solution (\ref{alpha0b}), (\ref{alphakb}) also depends on the 
constant complex $n$ dimensional vector  $\ba_1=(a_1^1,\ldots, a_1^n)^T\in 
\C^n$, which is the last $n$ components of the vector $ 
a_1\rangle=(p_1,q_1,a_1^1,\ldots, a_1^n)^T$. In general, the vector
$\vec{\alpha}$ 
of 
the solution (\ref{alphakb}) belongs to the two dimensional subspace 
$W(\ba_1)=\mbox{Span}_\bbbr(\re \ba_1,\im\ba_1)\subseteq\bbbr^n$. If the 
subspace $W(\ba_1)$ is one dimensional ($\re \ba_1\sim \im\ba_1$) 
then the obtained one breather solution represents  the well known breather 
solution of the scalar sine-Gordon equation (with 
$\theta_{xt}=\sin\theta:\ \beta=\cos\theta, \
\vec{\alpha}=\tfrac{\ba_1}{|\ba_1|}\sin\theta$), which 
is 
a subsystem of the vector sine-Gordon equation. Solution  (\ref{alphakb}) 
corresponding to a two-dimensional subspace $W(\ba)$ cannot be obtained 
from a solution of the scalar sine-Gordon equation. However, it can be obtained
from the breather solution of the two-dimensional vector sine-Gordon equation.

Similar to the kink case (see Remark \ref{rem0}), the breather solution is
parametrised by a point on the complex Grassmanian ${\rm
Gr}_{1,2+n}=\mathbb{CP}^{n+1}$ rather than the complex vector $a\rangle$ itself.

\section{Multi-soliton solutions and soliton interactions}

The vector sine-Gordon equation  has two types of one  soliton solutions,
namely 
the kink and breather solutions discussed in sections \ref{Kink} and
\ref{sec_br}. The set of
poles of the dressing matrix $\Phi(\lambda)$ for multi-soliton solutions is a 
union of a finite number of the kink and breather type orbits.   
We can construct the dressing matrix recursively by presenting it as a composition 
of elementary dressing factors of the kink and breather type.  Another approach 
is based on a representation of the dressing matrix in 
the form of partial fractions with respect to the spectral parameter 
$\lambda$. The latter approach is more conventional, but leads to a big system 
of linear algebraic equations, the size of which is determined by the number 
of solitons in the multi-soliton solution.

In order to construct a general multi-soliton dressing of the vector
sine-Gordon 
equation (\ref{vsG}) with $m_1$ kinks and $m_2$ breathers we need to choose the
following 
data:
\begin{itemize}
 \item a set of $m_1$ distinct positive real numbers 
$\{\nu_s\}_{s=1}^{m_1}$,
 \item a set of $m_1$ vectors $a_s\rangle,\ s=1,\ldots, m_1$ of the form 
$a_s\rangle=(ip_s,q_s,a^1_s,\ldots, a^n_s)^T$, such that  
$$\langle 
a_s a_s\rangle=-p_s^2+q_s^2+(a^1_s)^2+\cdots +(a^n_s)^2=0, \qquad  
p_s,q_s,a^1_s,\ldots,a^n_s\in\bbbr$$
and the real ``sub-vector'' $(a^1_s,\ldots, 
a^n_s)$ is non-zero.
\item a set of $m_2$ distinct complex  numbers 
$\{\mu_s=\gamma_s+i\omega_s\,|\, 
,\gamma_s>0,\ \omega_s>0 \}_{ s=m_1+1 } ^ {m_1+m_2},\ $
 \item a set of $m_2$ complex vectors
$a_s\rangle=(p_s,q_s,a^1_s,\ldots, 
a^n_s)^T$, $s=m_1+1, \ldots, m_1+m_2$ and $ p_s, q_s, a^1_s, \ldots, 
a^n_s\in\C$ such that  $\langle 
a_s a_s\rangle=0$ and 
``sub-vectors'' $(p_s,q_s)$ and $(a^1_s,\ldots, a^n_s)$ are both non-zero.
\end{itemize}
Let us first construct the $m_1$-kink solution.
We  start from a trivial solution of the vector sine-Gordon equation 
$\beta_0=1,\vec{\alpha}_0=0$, so that  
\[ 
 U_0=0,\quad V_0=J, \quad \Psi_0(\lambda,x,t)=\exp(\lambda x J-\lambda^{-1} t 
J).
\]
We denote the kink dressing matrix constructed in the Section \ref{Kink} as
\begin{equation*}
\widehat \Phi(\lambda, 
\mu,a\rangle)=I_{n+2}+\frac{1}{(\lambda-\mu)}\, \frac{Qa\rangle\langle 
a}{\langle aQa\rangle}-\frac{1}{(\lambda+\mu)}\, \frac{a\rangle\langle 
aQ}{\langle aQa\rangle}.
\end{equation*}
Now the multi-kink solution of the vector sine-Gordon equation and the 
corresponding the fundamental solution for the associated linear problems 
can be found recursively
\begin{equation*} \begin{array}{l}        
V_s=\widehat \Phi(0,i\nu_s,\Psi_{s-1}(i\nu_s,x,t)a_s\rangle)V_{s-1}
\widehat \Phi^T(0, i\nu_s, \Psi_ { s-1 }(i\nu_s,x,t) a_s\rangle);\\
\Psi_s(\lambda,x,t)=\widehat \Phi(\lambda,i\nu_s,\Psi_{s-1}(i\nu_s,x,
t)a_s\rangle)\Psi_{ s-1 }(\lambda,x,t);
                 \end{array}\qquad s=1, 2, ..., m_1 .
\end{equation*}
Having constructed the $m_1$-kink solution $V_{m_1}$ and the
corresponding fundamental solution of the linear problem $\Psi_{m_1}(\lambda, x,
t)$ we can add $m_2$ breathers using the breather dressing matrix constructed
in Section \ref{sec_br} (see Proposition \ref{prob}) denoted as
\begin{equation*}
\widetilde \Phi(\lambda,\mu, a\rangle) 
=I_{n+2}+\frac{A}{\lambda-\mu}-\frac{QAQ}{\lambda+\mu}+\frac{ \bar A}{
\lambda-{\bar \mu} } -\frac{Q{\bar A}Q}{\lambda+{\bar \mu}}
\end{equation*}
in a similar recursive way
\begin{equation*} \begin{array}{l}        
V_s=\widetilde \Phi(0,\mu_s,\Psi_{s-1}(\mu_s,x,t)a_s\rangle)V_{s-1}
\widetilde \Phi^T(0, \mu_s, \Psi_ { s-1 }(\mu_s,x,t) a_s\rangle);\\
\Psi_s(\lambda,x,t)=\widetilde \Phi(\lambda,\mu_s,\Psi_{s-1}(\mu_s,x,
t)a_s\rangle)\Psi_{ s-1 }(\lambda,x,t);
                 \end{array}\qquad s=m_1+1, ..., m_1+m_2 .
\end{equation*}
Moreover, we could change the order of dressings by creating $m_2$-breather
solutions and then adding $m_1$ kinks, or even make dressings in an arbitrary
order. Such recursive approach is useful for the study of the effects
of soliton interactions as shown in the following sections.

\subsection{Interaction of two kinks}\label{interk}
In this section, we study the interaction of two kinks.
A two-kink solution corresponding to the sets 
$$\{\nu_1,a_1\rangle=(i p_1,q_1,a_1^1,\ldots ,a_1^n)^T\}, \quad
\{\nu_2,a_2\rangle=(i p_2,q_2,a_2^1,\ldots ,a_2^n)^T\},\quad 0<\nu_1<\nu_2$$ 
represents two kinks moving to the left with speeds $\nu_1^{-2}$ and  
$\nu_2^{-2}$ respectively. Their trajectories on the $(x,t)$--plane intersect 
near the point $(X,T)$, which can be found from the system of equations (c.f. 
(\ref{posi}))
\[
 \nu_1 X+\nu_1^{-1}T=\frac{1}{2}\ln 
\big{|}\frac{p_1-q_1}{p_1+q_1}\big{|},\qquad \nu_2 
X+\nu_2^{-1}T=\frac{1}{2}\ln \big{|}\frac{p_2-q_2}{p_2+q_2}\big{|}\, .
\]
Far from this point as $t\to\pm\infty$ the solution tends to the sum of two 
simple kinks moving along the trajectories
\[
 \nu_1 (x-x_1^\pm)+\nu_1^{-1}t=0,\qquad  \nu_2 (x-x_2^\pm)+\nu_2^{-1}t=0
\]
respectively. Thus, the effect of the interaction is a shift of a straight line 
trajectory of each kink. In Figure \ref{fig2}, we plot a two-kink solutions
when $n=2$ for $\alpha_1$ and $\alpha_2$ at time 
$t=6$ with
\[
\nu_1=0.9,\qquad \langle a_1=(i, 0, 0, 1);\qquad
\nu_2=1.1,\qquad \langle a_2=(i, 0, 1, 0).
\]
These two kinks intersect at the point $x=0$. At $t=6$, the position of the
kink corresponding to $\nu_1$ is at $x=\simeq -7.41$ and the other
kink is at $x=-4.96$.

In order to determine the shift $x_2=x_2^+-x_2^-$ of the trajectory of the 
second kink we consider the limits $t\to\pm\infty$ assuming $\rho_2=\nu_2 
x+\nu_2^{-1}t$ to be finite. In this case 
\[
 \rho_1=\nu_1 
x+\nu_1^{-1}t=\frac{\nu_1}{\nu_2}\rho_2+\frac{t}{\nu_1}\left(1-\frac{\nu_1^2}{
\nu _2^2}\right)\to\pm\infty,\quad \mbox{as }\ t\to\pm\infty.
\]
Therefore 
\begin{equation}
  \Psi_1(\lambda,x,t)a_2\rangle=\widehat \Phi(\lambda,i\nu_1,\Psi_0(i\nu_1,x,
t)a_1\rangle)\Psi_{0}(\lambda,x,t)a_2\rangle\to \widehat\Phi_1^\pm (\lambda)
\Psi_{0}(\lambda,x,t)a_2,\quad \mbox{as }\ t\to\pm\infty,\label{psi1pm}
\end{equation}
where 
\[
 \widehat\Phi_1^\pm (\lambda)=\lim_{\rho_1\to\pm\infty}\widehat 
\Phi(\lambda,i\nu_1,\Psi_0(i\nu_1,x,
t)a_1\rangle).
\]
The latter limit can be easily computed since $\Psi_0(i\nu_1,x,
t)$ is a function of $\rho_1$ only (\ref{psi0kink}). In particular we have 
\[
 \lim_{\rho_1\to\pm\infty}\frac{Q\Psi_0(i\nu_1,x,
t)a_1\rangle\langle a_1 \Psi_0(i\nu_1,x,
t)^T}{\langle a_1 \Psi_0(i\nu_1,x,
t)^TQ\Psi_0(i\nu_1,x,
t)a_1\rangle}=\frac{1}{2}\left(\begin{array}{ccc} 
1&\mp i&0\\\pm i&1&0\\0&0&0 \end{array} \right)
\]
and therefore 
\begin{eqnarray}\label{phibki}
\widehat\Phi_1^\pm (\lambda)=\frac{1}{\lambda^2+\nu_1^2}\left(\begin{array}{ccc}
\lambda^2-\nu_1^2 & \pm 2\lambda \nu_1&0\\\mp 2\lambda 
\nu_1&\lambda^2-\nu_1^2&0\\0&0&(\lambda^2+\nu_1^2)I
\end{array}\right).
\end{eqnarray}
Noting that $[\widehat\Phi_1^\pm (\lambda),\Psi_0(\lambda,x,
t)]=0$ we can represent the limits (\ref{psi1pm}) in the form
\[
 \widehat\Phi_1^\pm (i\nu_2)
\Psi_{0}(i\nu_2,x,t)a_2\rangle=
\Psi_{0}(i\nu_2,x,t)a_2^\pm\rangle
\]
where 
\[ a_2^\pm\rangle=\left(\begin{array}{c}
i(\gamma_1p_2\mp\gamma_2 
q_2)\\ 
\gamma_1 q_2\mp\gamma_2 
p_2\\
 \ba_2
\end{array}\right),\qquad 
\gamma_1=\frac{\nu_2^2+\nu_1^2}{\nu_2^2-\nu_1^2} , \qquad 
\gamma_2=\frac{2 \nu_1\nu_2}{\nu_2^2-\nu_1^2}.
\]
Using formula (\ref{posi}) we obtain
\begin{eqnarray*}
&& x_2^{+}=\lim_{\rho\rightarrow +\infty}x_0=\frac{1}{2 \nu_2}
\ln \big{|}\frac{(\gamma_1+\gamma_2)(p_2-
q_2)}{(\gamma_1-\gamma_2)(p_2+ q_2)}\big{|};\\
&& 
x_2^{-}=\lim_{\rho\rightarrow -\infty}x_0=\frac{1}{2 \nu_2}
\ln \big{|}\frac{(\gamma_1-\gamma_2)(p_2-q_2)}{(\gamma_1+\gamma_2)(p_2+q_2)}\big{|}.
\end{eqnarray*} 
Therefore, the position shift for the kink with speed $1/\nu_2^2$ is
\begin{eqnarray*}
\Delta x_2=x_2^{+}-x_2^{-}=\frac{1}{2\nu_2}\ln
\left(\frac{\gamma_1+\gamma_2}{\gamma_1-\gamma_2}\right)^2=\frac{1}{\nu_2}\ln
\left(\frac{\nu_1+\nu_2}{\nu_1-\nu_2}\right)^2
=\frac{2}{\nu_2}\ln
\frac{\nu_2+\nu_1}{\nu_2-\nu_1} .
\end{eqnarray*}
In a similar way one can obtain that the position shift for the kink with speed
$1/\nu_1^2$ is
\begin{eqnarray*}
\Delta x_1=x_1^{+}-x_1^{-}
=-\frac{1}{\nu_1}
\ln\left(\frac{\nu_1+\nu_2}{\nu_2-\nu_1}\right)^2=\frac{2}{\nu_1}
\ln\frac{\nu_2-\nu_1}{\nu_2+\nu_1}  .
\end{eqnarray*}
Notice that these position shifts $\Delta x_1$ and $\Delta x_2$ for kinks after
collision only depend on the pole position $\nu_1$ and $\nu_2$. They are
exactly the same as that of the  scalar 
sine-Gordon equation (\ref{ssg}) given in \cite{fad87}.

\subsection{Interaction of two breathers}\label{interb}

A two-breather solution corresponding to the sets 
$$\{\mu_1,a_1\rangle=(p_1,q_1,a_1^1,\ldots ,a_1^n)^T\}, \quad
\{\mu_2,a_2\rangle=(p_2,q_2,a_2^1,\ldots ,a_2^n)^T\},\quad 0<|\mu_1|<|\mu_2|$$ 
represents two breathers moving to the left with speeds $|\mu_1|^{-2}$ and  
$|\mu_2|^{-2}$ respectively. Their trajectories on the $(x,t)$--plane intersect 
near the point $(X,T)$, which can be found from the system of equations (c.f. 
(\ref{breapo}))
\begin{eqnarray*}
 \omega_l X+\omega_l (\gamma_l^2+\omega_l^2)^{-1}T=\frac{1}{2}\ln
\big{|}\frac{p_l-q_l i}{p_l+q_l i}\big{|}, \quad l=1, 2.
\end{eqnarray*}
Far from this point as $t\to\pm\infty$ the solution tends to the sum of two 
simple breathers moving along the trajectories
\[
  x-x_l^\pm+(\gamma_l^2+\omega_l^2)^{-1}t=0,\qquad  l=1,
2.
\]
with phase $\phi_l^\pm$.
Thus, the effect of the interaction is a shift of a straight line 
trajectory and the change of the phase for each breather. 

In order to determine the shift $\eta_2=\eta_2^+-\eta_2^-$ of the trajectory
and the change of the phase for
the second breather we consider the limits $t\to\pm\infty$ assuming
$\rho_2= x+(\gamma_2^2+\omega_2^2)^{-1}t$ to be finite. In
this case 
\[
 \rho_1= x+(\gamma_1^2+\omega_1^2)^{-1} t
=\rho_2+
\left(\frac{1}{\gamma_1^2+\omega_1^2}-\frac{1}{\gamma_2^2+\omega_2^2}\right)
t\to\pm\infty,\quad \mbox{as }\ t\to\pm\infty.
\]
Therefore 
\begin{equation}
  \Psi_1(\lambda,x,t)a_2\rangle=\widetilde \Phi(\lambda,\mu_1,\Psi_0(\mu_1,x,
t)a_1\rangle)\Psi_{0}(\lambda,x,t)a_2\rangle\to \widetilde\Phi_1^\pm (\lambda)
\Psi_{0}(\lambda,x,t)a_2,\quad \mbox{as }\ t\to\pm\infty,\label{psi2pm}
\end{equation}
where 
\[
 \widetilde\Phi_1^\pm (\lambda)=\lim_{\rho_1\to\pm\infty}\widetilde 
\Phi(\lambda,\mu_1,\Psi_0(\mu_1,x,t)a_1\rangle).
\]
The latter limit can be easily computed using $\Psi_0(\mu_1,x, t)$ 
(\ref{psi0b}). In particular we have 
\[
 \lim_{\rho_1\to\pm\infty}A=\frac{\omega_1
\mu_1}{\gamma_1}\left(\begin{array}{ccc} 
i&\pm1&0\\\mp1&i&0\\0&0&0 \end{array} \right)
\]
and therefore 
\begin{eqnarray}\label{phibr}
\widetilde\Phi_1^\pm (\lambda)
=\left(\begin{array}{ccc}\frac{(\lambda^2-|\mu_1|^2)^2-4 \lambda^2
\omega_1^2}{(\lambda^2-\mu_1^2)(\lambda^2-{\bar \mu}_1^2)}
&\pm\frac{4 \lambda \omega_1
(\lambda^2-|\mu_1|^2)}{(\lambda^2-\mu_1^2)(\lambda^2-{\bar \mu}_1^2)}&0\\
\mp \frac{4 \lambda \omega_1
(\lambda^2-|\mu_1|^2)}{(\lambda^2-\mu_1^2)(\lambda^2-{\bar \mu}_1^2)}
& \frac{(\lambda^2-|\mu_1|^2)^2-4 \lambda^2
\omega_1^2}{(\lambda^2-\mu_1^2)(\lambda^2-{\bar \mu}_1^2)}&0\\
0&0&I \end{array} \right).
\end{eqnarray}
Noting that $[\widetilde\Phi_1^\pm (\lambda),\Psi_0(\lambda,x,
t)]=0$ we can represent the limits (\ref{psi2pm}) in the form
\[
 \widetilde\Phi_1^\pm (\mu_2)
\Psi_{0}(\mu_2,x,t)a_2\rangle=
\Psi_{0}(\mu_2,x,t)a_2^\pm\rangle .
\]
 We denote $(1,1)$ and $(1,2)$ entries of matrix $\widetilde\Phi_1^+ (\mu_2)$ by
$\kappa_1$ and $\kappa_2$, respectively. Notice that
\begin{eqnarray*}
&&\kappa_1-\kappa_2 i=\frac{(\mu_2^2-|\mu_1|^2-2 \mu_2
\omega_1 i)^2}{(\mu_2^2-\mu_1^2)(\mu_2^2-{\bar \mu}_1^2)}
=\frac{(\mu_2-\mu_1)^2 (\mu_2+\bar \mu_1)^2}{(\mu_2^2-\mu_1^2)(\mu_2^2-{\bar
\mu}_1^2)};\\
&&\kappa_1+\kappa_2 i=\frac{(\mu_2^2-|\mu_1|^2+2 \mu_2
\omega_1 i)^2}{(\mu_2^2-\mu_1^2)(\mu_2^2-{\bar \mu}_1^2)}
=\frac{(\mu_2+\mu_1)^2 (\mu_2-\bar \mu_1)^2}{(\mu_2^2-\mu_1^2)(\mu_2^2-{\bar
\mu}_1^2)}
\end{eqnarray*}
Using formula for $\eta_0$ in (\ref{breapo}) we obtain
\begin{eqnarray*}
&& x_2^{+}=\frac{1}{\omega_2}\lim_{\rho_1\rightarrow+\infty}\eta_0
=\frac{1}{2\omega_2}
\ln \big{|}\frac{(\kappa_1+\kappa_2 i) (p_2-q_2 i)}
{\kappa_1-\kappa_2 i) (p_2+q_2 i)}\big{|}
=\frac{1}{2\omega_2}
\ln \big{|}\frac{(\mu_1+\mu_2)^2 (\bar \mu_1- \mu_2)^2 (
p_2-q_2 i )}{(\mu_1-\mu_2)^2 (\bar \mu_1+ \mu_2)^2 ( p_2+q_2 i)}\big{|};\\
&& 
x_2^{-}=\frac{1}{\omega_2}\lim_{\rho_1\rightarrow -\infty}\eta_0
=\frac{1}{2\omega_2}
\ln \big{|}\frac{(\kappa_1-\kappa_2 i) (p_2-q_2 i)}
{\kappa_1+\kappa_2 i) (p_2+q_2 i)}\big{|}
=\frac{1}{2\omega_2}
\ln \big{|}\frac{(\mu_1-\mu_2)^2 (\bar \mu_1+ \mu_2)^2 (
p_2-q_2 i )}{(\mu_1+\mu_2)^2 (\bar \mu_1- \mu_2)^2 ( p_2+q_2 i)}\big{|}.
\end{eqnarray*}
Therefore, the position shift for the breather with speed $1/|\mu_2|^2$ is
\begin{eqnarray*}
\Delta x_2=x_2^{+}-x_2^{-}=\frac{2}{\omega_2}\ln \big{|}\frac{(\mu_1+\mu_2)
(\bar \mu_1- \mu_2)}{(\mu_1-\mu_2) (\bar \mu_1+ \mu_2)} \big{|}.
\end{eqnarray*}
Using the formula for $\phi_0$ in (\ref{breaph}), we compute the phase shift
for this breather. Indeed, we have
\begin{eqnarray*}
&& \phi_2^{+}=\lim_{\rho_1\rightarrow +\infty}\phi_0= \arg \left( 
\frac{(\mu_1-\mu_2)^2 (\bar \mu_1+\mu_2)^2 (
p_2+q_2 i )}{(\mu_1+\mu_2)^2 (\bar \mu_1- \mu_2)^2 ( p_2-q_2 i)}
 \right);
 \\&&
\phi_2^{-}=\lim_{\rho_1\rightarrow +\infty}\phi_0= \arg \left( 
\frac{(\mu_1+\mu_2)^2 (\bar \mu_1-\mu_2)^2 (
p_2+q_2 i )}{(\mu_1-\mu_2)^2 (\bar \mu_1+ \mu_2)^2 ( p_2-q_2 i)} \right) \ .
\end{eqnarray*} 
Hence the phase shift is
\begin{eqnarray*}
&&\Delta\phi_2=\phi_2^{+}-\phi_2^{-}= 2\arg \left( 
\frac{(\mu_1-\mu_2)^2 (\bar \mu_1+\mu_2)^2} 
{(\mu_1+\mu_2)^2 (\bar \mu_1-\mu_2)^2}\right)
=4 \arg \left( \frac{(\mu_1-\mu_2)(\bar \mu_1+ \mu_2)}{ ( \mu_1+
\mu_2)(\bar \mu_1- \mu_2)} \right)
\ .
\end{eqnarray*}
In a similar way we obtain that the position and phase shifts for the breather
with speed $1/|\mu_1|^2$ are
\begin{eqnarray*}
&&\Delta x_1=x_1^{+}-x_1^{-}=\frac{2}{\omega_1}\ln
\big{|}\frac{(\mu_1-\mu_2) (\bar\mu_1+ \mu_2)}{(\mu_1+\mu_2) (\bar \mu_1-
\mu_2)} \big{|};\\
&&\Delta\phi_1=\phi_1^{+}-\phi_1^{-}=4 \arg \left(
\frac{(\mu_1+\mu_2)(\bar \mu_1- \mu_2)}{ ( \mu_1-
\mu_2)( \bar \mu_1+\mu_2)} \right)\ .
\end{eqnarray*}
Similar to the case of interaction of two kinks,  the position and phase shifts
for breathers after collision only depend on the pole position $\mu_1$ and
$\mu_2$, which are exactly the same as that of the  scalar 
sine-Gordon equation (\ref{ssg}) given in \cite{fad87}.

\subsection{Interaction of one kink and one breather}

A kink-breather or breather-kink solution corresponding to the sets 
$$\{i \nu, a\rangle=(i \hat p, \hat q, \hat a^1,\ldots , \hat a^n)^T\}, \quad
\{\mu,\tilde a \rangle=(\tilde p,\tilde q,\tilde a^1,\ldots ,\tilde
a^n)^T\},\quad 0<\nu<|\mu|,$$ 
where $\nu, \hat p, \hat q, \hat a^1,\ldots , \hat a^n \in \mathbb{R}$ and
$\mu=\gamma +i \omega, \tilde p, \tilde q, \tilde a^1, \ldots , \tilde a^n\in
\mathbb{C}$, 
represents one kink and one breather moving to the left with speeds $\nu^{-2}$
and 
$|\mu|^{-2}$ respectively. Their trajectories on the $(x,t)$--plane intersect 
near the point $(X,T)$, which can be found from the system of equations (cf. 
(\ref{posi}))
\[
 \nu X+\nu^{-1}T=\frac{1}{2}\ln 
\big{|}\frac{\hat p-\hat q}{\hat p+\hat q}\big{|},\qquad 
 \omega X+\omega (\gamma^2+\omega^2)^{-1}T=\frac{1}{2}\ln
\big{|}\frac{\tilde p-\tilde q i}{\tilde p+\tilde q i}\big{|} \ .
\]
Far from this point as $t\to\pm\infty$ the solution tends to the sum of one
kink and one breather moving along the trajectories
\[
 \nu (x-\hat x^\pm)+\nu^{-1}t=0,\qquad 
(x-\tilde x^\pm)+(\gamma^2+\omega^2)^{-1}t=0
\]
respectively, and the phase for the breather is $\phi^\pm$. Thus, the effect of
the interaction is a shift of a straight line 
trajectory of each of them.

In order to determine the shift $\Delta\tilde x=\tilde x^+-\tilde x^-$ of the
trajectory and phase shift $\Delta\phi=\phi^+-\phi^-$ of
the breather we consider the limits $t\to\pm\infty$ assuming
$\tilde \rho= x+(\gamma^2+\omega^2)^{-1}t$ to be finite. In
this case 
\[
 \hat \rho= \nu x+\nu^{-1} t
=\nu \left( \tilde \rho+\left(\frac{1}{\nu^2}-\frac{1}{\gamma^2+\omega^2}\right)
t \right) \to\pm\infty,\quad \mbox{as }\ t\to\pm\infty.
\]
Therefore 
\begin{equation}
  \Psi_1(\lambda,x,t)\tilde a\rangle=\widehat
\Phi(\lambda,i \nu,\Psi_0(i\nu,x,t) \hat a\rangle)
\Psi_{0}(\lambda,x,t) \tilde a\rangle\to \widehat\Phi_1^\pm (\lambda)
\Psi_{0}(\lambda,x,t) \tilde a,\quad \mbox{as }\ t\to\pm\infty,\label{psi3pm}
\end{equation}
where 
\[
 \widehat\Phi_1^\pm (\lambda)=\lim_{\hat \rho\to\pm\infty}\widehat 
\Phi(\lambda,i \nu,\Psi_0(i \nu,x,t)\hat
a\rangle)=\frac{1}{\lambda^2+\nu^2}\left(\begin{array}{ccc}
\lambda^2-\nu^2 & \pm 2\lambda \nu&0\\\mp 2\lambda 
\nu&\lambda^2-\nu^2&0\\0&0&(\lambda^2+\nu^2)I
\end{array}\right).
\]
according to (\ref{phibki}).
Noting that $[\widehat\Phi_1^\pm (\lambda),\Psi_0(\lambda,x,
t)]=0$ we represent the limits (\ref{psi3pm}) in the form
\[
 \widehat\Phi_1^\pm (\mu)
\Psi_{0}(\mu,x,t)\tilde a\rangle=
\Psi_{0}(\mu,x,t)\tilde a^\pm\rangle
\]
where
\[ \langle \tilde a^\pm=\left(\frac{\mu^2-\nu^2}{\mu^2+\nu^2} \tilde p\pm
\frac{2\mu \nu}{\mu^2+\nu^2} \tilde q,
\mp\frac{2\mu \nu}{\mu^2+\nu^2}\tilde p+\frac{\mu^2-\nu^2}{\mu^2+\nu^2} \tilde
q, \tilde a^1,\ldots ,\tilde a^n \right)
\]
Using formula (\ref{breapo}) we obtain
\begin{eqnarray*}
&& \tilde x^{+}=\frac{1}{\omega}\lim_{\hat \rho\rightarrow
+\infty}\eta_0=\frac{1}{2\omega}
\ln\frac{((\nu+\omega)^2+\gamma^2) |\tilde p-\tilde q i|}
{((\nu-\omega)^2+\gamma^2) |\tilde p+\tilde q i|};\\
&& 
\tilde x^{-}=\frac{1}{\omega}\lim_{\hat \rho\rightarrow
-\infty}\eta_0=\frac{1}{2\omega}\ln\frac{((\nu-\omega)^2+\gamma^2)
|\tilde p-\tilde q i|}
{((\nu+\omega)^2+\gamma^2)  |\tilde p+\tilde q i|}.
\end{eqnarray*} 
It follows that the position shift for the breather is
\begin{eqnarray*}
 \Delta\tilde x=\tilde x^{+}-\tilde x^{-}
=\frac{2}{\omega}\ln \big{|} \frac{\mu+\nu i}{\mu-\nu i} \big{|} 
=\frac{1}{\omega}\ln
\frac{(\nu+\omega)^2+\gamma^2}{(\nu-\omega)^2+\gamma^2} .
\end{eqnarray*}
Using formula (\ref{breaph}) we obtain
\begin{eqnarray*}
&&\phi^{+}=\lim_{\hat \rho\rightarrow +\infty}\phi_0= \arg\left(
\frac{(\mu-\nu i)^2(\tilde p+\tilde q i)}{(\mu+\nu i)^2(\tilde p-\tilde q i)}
\right);\quad 
\phi^{-}=\lim_{\hat \rho\rightarrow -\infty}\phi_0= \arg\left(
\frac{(\mu+\nu i)^2(\tilde p+\tilde q i)}{(\mu-\nu i)^2(\tilde p-\tilde q i)}
\right).
\end{eqnarray*}
Hence the phase shift for the breather is
\begin{eqnarray*}
\Delta\phi=\phi^{+}-\phi^{-}=4 \arg\left(
\frac{\mu-\nu i}{\mu+\nu i}
\right)
\end{eqnarray*}
which are only dependent on the positions of poles, the values of $\nu$ and
$\mu$.

In order to determine the shift $\Delta \hat x=\hat x^+-\hat x^-$ of the
trajectory of
the kink we consider the limits $t\to\pm\infty$ assuming
$\hat \rho= \nu x+\nu^{-1} t$ to be finite. In
this case 
\[
 \tilde \rho= x+(\gamma^2+\omega^2)^{-1}t
=\frac{\hat \rho}{\nu}-\left(\frac{1}{\nu^2}-\frac{1}{\gamma^2+\omega^2}
t \right) \to\mp\infty,\quad \mbox{as }\ t\to\pm\infty.
\]
Therefore 
\begin{equation}
  \Psi_1(\lambda,x,t)\hat a\rangle=\widetilde
\Phi(\lambda,\mu,\Psi_0(\mu,x,t) \tilde a\rangle)
\Psi_{0}(\lambda,x,t) \hat a\rangle\to \widetilde\Phi_1^\pm (\lambda)
\Psi_{0}(\lambda,x,t) \hat a,\quad \mbox{as }\ t\to\pm\infty,\label{psi4pm}
\end{equation}
where 
\[
 \widetilde\Phi_1^\pm (\lambda)=\lim_{\tilde \rho\to\mp\infty}\widetilde 
\Phi(\lambda, \mu,\Psi_0( \mu,x,t)\tilde
a\rangle)=\left(\begin{array}{ccc}\frac{(\lambda^2-|\mu|^2)^2-4 \lambda^2
\omega^2}{(\lambda^2-\mu^2)(\lambda^2-{\bar \mu}^2)}
&\mp\frac{4 \lambda \omega
(\lambda^2-|\mu|^2)}{(\lambda^2-\mu^2)(\lambda^2-{\bar \mu}^2)}&0\\
\pm \frac{4 \lambda \omega
(\lambda^2-|\mu|^2)}{(\lambda^2-\mu^2)(\lambda^2-{\bar \mu}^2)}
& \frac{(\lambda^2-|\mu|^2)^2-4 \lambda^2
\omega^2}{(\lambda^2-\mu^2)(\lambda^2-{\bar \mu}^2)}&0\\
0&0&I \end{array} \right).
\]
according to (\ref{phibr}).
Noting that $[\widetilde\Phi_1^\pm (\lambda),\Psi_0(\lambda,x,
t)]=0$ we represent the limits (\ref{psi4pm}) in the form
\[
 \widetilde\Phi_1^\pm (i \nu)
\Psi_{0}(i \nu,x,t)\hat a\rangle=
\Psi_{0}(i \nu,x,t)\hat a^\pm\rangle
\]
where
\[ \hat a^\pm\rangle=\left(\begin{array}{c}
i\frac{\left( (\gamma^2+\omega^2+\nu^2)^2+4
\nu^2 \omega^2\right)\hat p\pm 4\nu \omega(\gamma^2+\omega^2+\nu^2)\hat q
}{(\gamma^2+\omega^2+\nu^2)^2-4 \nu^2 \omega^2}\\
\frac{\pm 4 \nu \omega(\gamma^2+\omega^2+\nu^2)\hat p+\left(
(\gamma^2+\omega^2+\nu^2)^2+4 \nu^2 \omega^2\right) \hat q
}{(\gamma^2+\omega^2+\nu^2)^2-4 \nu^2 \omega^2} \\
\hat \ba
\end{array}
\right)
\]
Using formula (\ref{posi}) we obtain
\begin{eqnarray*}
&&\hat x^{+}=\lim_{\tilde\rho\rightarrow -\infty}x_0=\frac{1}{2 \nu}
\ln\frac{((\nu-\omega)^2+\gamma^2)^2 |\hat p-\hat
q|}{((\nu+\omega)^2+\gamma^2)^2 |\hat p+\hat q|};\\
&& 
\hat x^{-}=\lim_{\tilde\rho\rightarrow +\infty}x_0=\frac{1}{2 \nu}
\ln\frac{((\nu+\omega)^2+\gamma^2)^2 |\hat p-\hat
q|} {((\nu-\omega)^2+\gamma^2)^2 |\hat p+\hat q|}.
\end{eqnarray*} 
Therefore, the position shift for the kink with speed $1/\nu^2$ is
\begin{eqnarray*}
\Delta \hat x=\hat x^{+}-\hat x^{-}=\frac{1}{2 \nu}
\ln\frac{((\nu-\omega)^2+\gamma^2)^4 }{((\nu+\omega)^2+\gamma^2)^4 }
=\frac{2}{\nu}
\ln\frac{(\nu-\omega)^2+\gamma^2}{(\nu+\omega)^2+\gamma^2}.
\end{eqnarray*}
In Figure \ref{fig3}, we give the contour plot of $\alpha_1$ to in $(x,t)$-plane
to demonstrate the phase and position shifts of one kink and one breather
interaction, where the kink and breather are corresponding to 
\[
\nu=1,\ \langle a_{\rm kink}=(10i, 7, 4, 5.916),\qquad \mu=2+0.6i,\ \langle
a_{\rm breather}=(i,2,3i,2.45) .
\]
They intersect approximately at the point $(-0.930,0,0626)$. After the
collision, the position of kink gets shifted forward by $0.455$
and the position of breather gets shifted backward by $0.759$.

\section{Conclusion}
In this paper we have found and studied kink, breather and multi-soliton 
solutions of the vector sine-Gordon equation (\ref{vsG}). The soliton solutions 
obtained in this paper are of rank one and are generic. Actually 
a kink solution cannot be of rank more than one. For example, in order to
construct 
a kink solution of rank two, one needs to have two vectors   $$\langle 
a_1=(ip_1,q_1,{\bf a}_1^T), \ \langle a_2=(ip_2,q_2,{\bf a}_2^T),\ 
p_1,p_2,q_1,q_2\in\bbbr, \ {\bf a_1,a_2}\in \bbbr^n$$
satisfying the conditions 
$\langle a_1a_1\rangle=0,\  \langle a_2a_2\rangle=0,\ \langle a_1a_2\rangle=0$.
This implies that vectors $\langle a_1$ and $\langle a_2$ are linearly 
dependent.
However, in the case of breathers the corresponding vectors are 
complex and therefore there is no such obstacle. Thus higher rank breather 
solutions may exist.

Our results show that a $k$-kink solution of the $n$-component 
($\vec\alpha\in\bbbr^n$) vector sine-Gordon equation ($k\le n$)  is also 
a solution of $k$-component sine-Gordon equation. Similarly, a general   
$k$-breather solution of $n$-component equation (\ref{vsG}) ($k\le n/2)$ is
a solution of $2k$ component vector sine-Gordon equation. In particular, a 
general breather solution of the vector sine-Gordon equation cannot be obtained 
from any solution of the scalar sine-Gordon equation, but is a solution of a 
two-component vector sine-Gordon equation. 

The result of interactions 
(positional shifts and phase shifts) due to collisions in the $m_1$-kink and 
$m_2$-breather solution of the vector sine-Gordon equation (\ref{vsG}) is 
exactly the same as in the case of the scalar 
sine-Gordon equation (\ref{ssg}), which has been studied in \cite{fad87}. It 
only depends on the positions of poles in the dressing matrices.
This is significantly different from the case of the vector generalisation of 
nonlinear Schr{\"o}dinger equation, where the result of interactions depends 
on the initial polarisation of the colliding solitons \cite{manakov}. This can 
be explained by the difference in the spectral and symmetry properties of their 
Lax operators. These properties have been essentially used in the construction 
of multi-soliton solutions.

\section*{Acknowledgements}
The paper is supported by the Leverhulme Trust, AVM's EPSRC grant EP/I038675/1 and JPW's EPSRC grant EP/I038659/1. 
All authors gratefully acknowledge the financial support. JPW would like to thank S.C. Anco for providing some references.


\begin{figure}[h!]
\begin{center}
 \includegraphics{./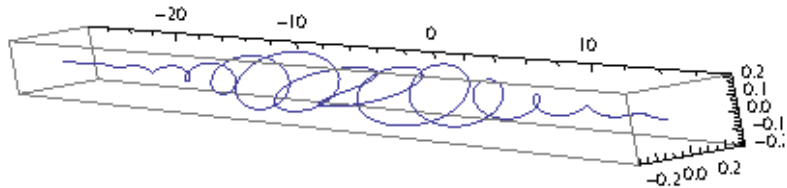}
 \end{center}\vspace{-0.7cm}
\caption{One breather solution. The curve  $(x,\alpha_1,\alpha_2)$ with 
$\mu=2+0.25i$ is plotted at time $t=3$.} \label{fig1}
\end{figure}

\begin{figure}[h!]
\begin{center}
 \includegraphics{./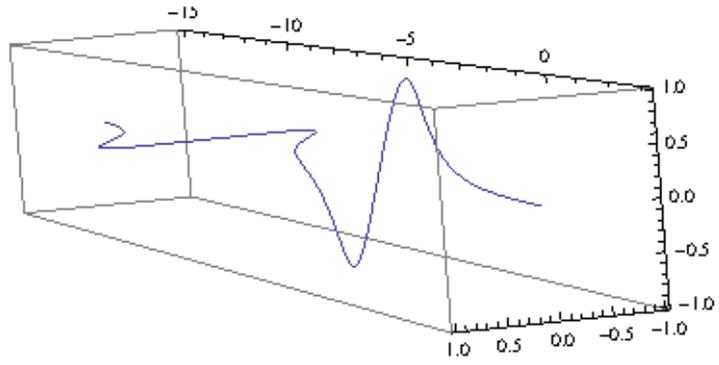}
\end{center}
\vspace{-1.1cm}
\caption{Two kink solution. The curve  $(x,\alpha_1,\alpha_2)$ with 
$\nu_1=.9,\ \nu_2=1.1$ is plotted at time $t=6$.}\label{fig2} 
\end{figure}

\begin{figure}[h!]
\begin{center}
\includegraphics[width=7cm]{./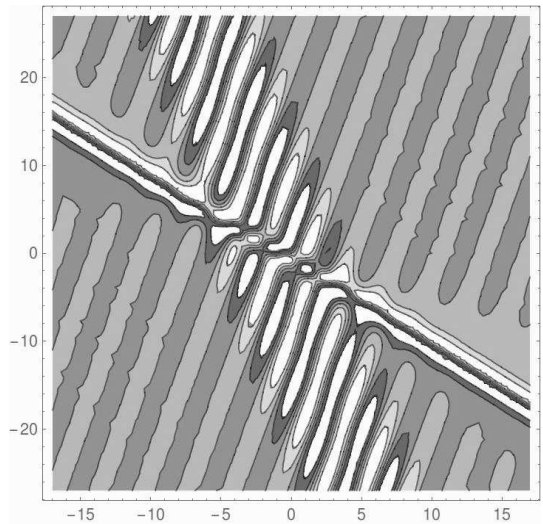}
\end{center}
\vspace{-0.3cm}
\caption{The contour plot of $\alpha_1$ in $(x,t)$-plane: Interaction of a kink
with a
breather}\label{fig3}
\end{figure}

\end{document}